\definecolor{lightblue}{rgb}{0.3,0.7,0.95}
\newcommand{\tree}[1]{\ensuremath \vec #1}
\newcommand{\R}{{\mathbb R}}
\newcommand{\C}{{\mathbb C}}
\newcommand{\N}{{\mathbb N}}
\newcommand{\T}{{\mathbb T}}
\DeclareMathOperator{\supp}{supp}     
\DeclareMathOperator*{\argmin}{argmin}
\newcommand{\K}{\mathbb{K}}
\newcites{WE}{References}     
\begin{document}
\title*{Hierarchical compressed sensing}
\titlerunning{Hierarchical compressed sensing}
\author{J. Eisert, A. Flinth, B. Gro\ss, I. Roth, G. Wunder}
\institute{Jens Eisert \at Dahlem Center for Complex Quantum Systems and
Department of Mathematics and Computer Science, Freie Universit\"at Berlin, Berlin, Germany.
\email{jense@physics.fu-berlin.de}
\and 
Axel Flinth \at Institute for Electrical Engineering, Chalmers University of Technology,
Gothenburg, Sweden. \email{flinth@chalmers.se}
\and
Benedikt Gro\ss \at Department of Mathematics and Computer Science, Freie Universit\"at Berlin, Berlin, Germany,
\email{benedikt.gross@fu-berlin.de} 
\and 
Ingo Roth \at 
Quantum Research Centre, Technology Innovation Institute, Abu Dhabi, UAE. \\ 
Dahlem Center for Complex Quantum Systems, 
Freie Universit\"at Berlin, Berlin, Germany. \\
\email{i.roth@fu-berlin.de}
\and
Gerhard Wunder \at Department of Mathematics and Computer Science, Freie Universit\"at Berlin, Berlin, Germany. \email{g.wunder@fu-berlin.de}}

\maketitle

\vspace{-3\baselineskip}

\abstract{Compressed sensing is a paradigm within signal processing that provides the means for recovering structured signals from linear measurements in a highly efficient manner. Originally devised for the recovery of sparse signals, it has become clear that a similar methodology would also carry over to a wealth of other classes of structured signals. In this work, we provide an overview over the theory of compressed sensing for a particularly rich family of such signals, namely those of hierarchically structured signals. Examples of such signals are constituted by blocked vectors, with only few non-vanishing sparse blocks. We present recovery algorithms based on efficient hierarchical hard-thresholding. The algorithms are guaranteed to converge, in a stable fashion both with respect to measurement noise as well as to model mismatches, to the correct solution provided the measurement map acts isometrically restricted to the signal class. We then provide a series of results establishing the required condition for large classes of measurement ensembles. Building upon this machinery, we sketch practical applications of this framework in machine-type communications and quantum tomography. }

\section{Introduction}
\let\thefootnote\relax\footnotetext{%
This book chapter is a report on some of the findings of the DFG-funded project
 EI 519/9-1 within  the priority 
 program `Compressed Sensing in Information Processing' (CoSIP).}
The field of compressed sensing studies the recovery of structured signals from linear measurements
\citeWE{FouRau13,CompressedSensingGitta}.  
Originally focusing on the structure of sparsity 
of vectors, the framework was quickly extended to the structure of low-rankness of matrices. These structures are simultaneously restrictive and rich. 
They are restrictive so that they allow for signal recovery using far fewer linear measurements than the ambient dimensions suggests and rich in that they naturally appear in a plethora of applications.
That being said,
 in many practically relevant applications, the signals feature a more restrictive structure than mere sparsity or low-rankness. 
A particularly important broad class arising in a wealth of contexts are \emph{hierarchically structured signals}. Such structures are in the focus of this book chapter. 

The perhaps simplest examples are constituted by \emph{hierarchically sparse vectors}.  
A two-level hierarchically sparse vector is a vector consisting of multiple blocks with a restricted support as follows: 
Only a small number of the blocks have non-vanishing entries and the blocks are themselves sparse. An illustrative example can be given via imagining a telecommunication base station responsible for handling a large set of potential users. If in each instance, only a few users actively transmit, and the messages that are transmitted are sparsely representable, the vector compiling all messages in its blocks is hierarchically sparse.
The hierarchically sparse vectors will serve as the main illustrative example of the entire chapter. 
It is straight-forward to generalize this notion for vectors with a hierarchy of nested blocks with sparsity assumptions restricting the number of non-vanishing blocks on each level. 

Another hierarchical structure of interest is given by replacing the sparsity constraint on the vector-valued blocks by a low-rank assumption of matrix-valued blocks. A motivating example here can be found in quantum tomography, where quantum states can be modelled as low-rank Hermitian matrices. 
Hierarchical structures of quantum states arise here in the tasks  
of performing quantum tomography with a partially uncalibrated measurement device or de-mixing sparse sums of quantum states.

An intriguing feature of hierarchically structured signals is that their recovery task is amenable to efficient thresholding algorithms. 
In general, thresholding algorithms such as the iterative hard-thresholding pursuit are built on the insight that, in contrast to the original recovery problem, the projection onto the set of structured signals is efficient and in fact often particularly simple. 
This allows one to employ algorithmic strategies such as projective gradient descent. 

For hierarchically sparse signals, it turns out that the calculation of the projection has the same computational complexity as the thresholding onto sparse signals. 
Furthermore, the hierarchical structure allows for the parallelization of the projections for the blocks on each level yielding potential for further reducing the time complexity by exploiting the restrictive structure. 
Based on this insight, we formally introduce variants of the \emph{iterative hard-thresholding (IHT) algorithm} and the \emph{hard-thresholding pursuit (HTP)} 
for hierarchically sparse signals.

For the IHT and HTP algorithm, recovery guarantees for measurement maps that act close to isometrically, on sparse vectors, exist.  
Due to their similarity, the recovery algorithms for hierarchically sparse signals inherit the recovery guarantees from the original IHT and HTP provided the measurement map exhibits a \emph{restricted isometry property (RIP)} that is adapted to the hierarchically structured signal set.  We refer to the modified RIP restricted to hierarchically sparse signals as the \emph{hierarchically restricted isometry property (HiRIP)}.

In this chapter, we derive a series of theoretical results concerning the HiRIP. Requiring only HiRIP instead of RIP for the measurement opens up the possibility of 
exploiting multiple benefits. 
First, standard measurement ensembles such as random Gaussian matrices can achieve HiRIP with a reduced sampling complexity compared to RIP.  
The achievable logarithmic improvement mirrors the reduced complexity of the restricted signal set compared to standard sparse vectors. 
Second, we introduce an ensemble of operators that do have the HiRIP, but not RIP in any parameter regime. 
We give this flexible class of operators the name \emph{hierarchical measurements}, since they are naturally adapt to hierarchical structures. 
Hierarchical measurements combine different measurement maps on each level of the hierarchy and, as we show, inherit HiRIP from standard RIP and coherence properties of their constituent maps. 
An important instance of hierarchical measurement are Kronecker-products of measurements such that each factor acts on the blocks of a certain hierarchy level. 

Finally, we illustrate how the framework of hierarchical compressed sensing can be applied in applications in machine-type communications and quantum technologies providing motivating examples and evaluations of practical performances.

Let us end with an outline of the remainder of the chapter. In   Sec.~\ref{section:signal_model} and Sec.~\ref{sec:hiRecoveryAlgs}, respectively, we formally introduce hierarchically sparse vectors, and present the algorithms used to recover them. 
Sec.~\ref{sec:hiRIP} is devoted to theoretical results concerning the hierarchical restricted isometry property (HiRIP) and step-by-step develops a flexible toolkit to establish the HiRIP for large classes of measurement ensembles. 
In Sec.~\ref{sec:sparseLowRank}, we move on to discussing the sparse, low-rank signal model, including how the recovery algorithms can be adapted. In Sec.~\ref{section:applications}, we provide a more specific discussions of selected applications. 
We close with a conclusion as well as a small outlook in  Sec.~\ref{sec:outlook}.

\section{Hierarchically sparse vectors}
\label{section:signal_model}

We consider structured sparse signals that are vectors over the field $\K$, referring to either the reals $\R$ or the complex numbers $\C$, and are hierarchically structured into blocks. The support is restricted by sparsity assumptions on one or multiple hierarchy levels. The simplest instance of hierarchically sparse signals are two-level hierarchically sparse vector with constant block-sizes and sparsities \citeWE{SprechmannEtAl:2010,FriedmanEtAl:2010, SprechmannEtAl:2011, SimonEtAl:2013}. 

\begin{definition}[Two-level hierarchically sparse vectors]
Let $N,n,s,\sigma\in\N$. 
A vector $ x\in\K^{Nn}$ is called \emph{$(s,\sigma)$-hierarchically sparse}, 
if it consists of blocks $ x_i\in\K^n$, $ x^\top= ( x^\top_1,\ldots,  x^\top_i,\ldots, x^\top_N)^\top$, 
where at most $s$ blocks $ x_i$ are non-zero, 
and each of the non-zero blocks are at most $\sigma$-sparse.
\end{definition}

For brevity, we write \emph{$(s,\sigma)$-sparse}, dropping the \emph{hierarchically} in the following. 
We refer to the set of $(s,\sigma)$-sparse vectors in $\K^{Nn}$ as $\mathcal{S}^{N,n}_{s,\sigma}$ or simply $\mathcal{S}$ if the parameters are clear from the context. 
In Fig.~\ref{WEfig:HiThresholding} (e), an illustration of a $(2,2)$-sparse vector with $N=5$ blocks and block-size $n=7$ is depicted. 
We also call the support $\supp( x) \subset [N] \times [n]$ of an $(s,\sigma)$-sparse vector a \emph{$(s,\sigma)$-sparse support}, where $[n] \coloneqq \{1, \ldots, n\}$. 
The definition of a two-level hierarchically sparse vectors can be generalized in several directions: 
We can allow different block sizes and block sparsities.
Furthermore, each block is allowed to be a hierarchically sparse vector itself. 
This gives rise to a more general recursive definition of hierarchically sparse vectors with arbitrary many levels. 
The defining data of such a general hierarchically sparse vector can be collected in a rooted tree consisting of nodes, labeled by block-sizes and -sparsities, see Fig.~\ref{WEfig:HiSparse}. 
We refer to Ref.~\citeWE{RothEtAl:2020:HiHTP} for a formal definition of general hierarchically sparse vectors. 
Other special cases of hierarchically sparse vectors have been considered in the literature. Prominent examples are \emph{block sparse} \citeWE{EldarMishali:2009a,EldarMishali:2009b} and \emph{level sparse} \citeWE{AdcockEtal:2013,LiAdcock:2016} vectors.

\begin{figure}[tb]
 	\centering
 	\input{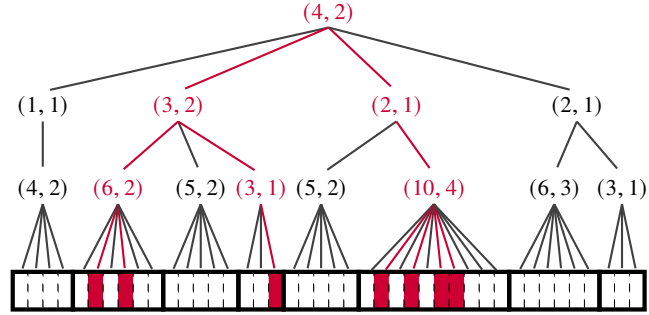}
 	\caption{This figure shows an example of a hierarchically sparse vector. The grouping of the entries is encoded in a rooted tree. The children of a vertex constitute a block at their level. The pair of values at each vertex indicates the block-size (number of children) and the sparsity, i.e.\ the number of children with non-vanishing entries. The leaves of the tree are identified with the entries of the vector. The support of the vector drawn below and corresponding vertices with non-vanishing entries are highlighted in red.  \textcopyright 2020 IEEE. Reprinted, with permission, from  Ref.~\protect\citeWE{RothEtAl:2020:HiHTP}.}
	\label{WEfig:HiSparse}
\end{figure}

Another setting where the hierarchical sparsity naturally emerges is so-called \emph{bi-sparsity}, see, e.g.,
Ref.\ \citeWE{foucart2019jointly}. 
In said reference, a Hermitian matrix $X\in \K^{n \times n}$ is called bisparse if there exists a set $S \subseteq [N]$ with $\vert{S}\vert\leq s$ so that $X_{ij}$ is non-zero, only if both $i$ and $j$ are in $S$. Clearly, any bisparse matrix can be interpreted as an $(s,s)$-sparse vector. 
More generally, a matrix $Y \in \K^{N \times n}$ with $Y_{ij}$ non-zero for $i\in S$ and $j \in \Sigma$ for sets with cardinalities $\vert S \vert =s,\vert$, $\Sigma \vert\leq \sigma$ can be regarded as $(s,\sigma)$-bisparse, and in the same manner identified with an $(s,\sigma)$-sparse vector. 
Bi-sparsity is of course more restrictive than hierarchical sparsity, but the projection operator onto the set of bisparse matrices is -- in stark contrast to its hierarchical sparsity counterpart -- NP-hard to compute. Hierarchical sparsity can thus be seen as a relaxation of bi-sparsity which allows for more efficient recovery procedures. We refer to Ref. \citeWE{foucart2019jointly} for a more comprehensive discussion on these matters, as well as other ways to relax the bisparse structures. 
We encounter this relaxation in conjunction with blind deconvolution in Sec.~\ref{sec:blinddeconv}, and a non-commutative analog of it in our discussion of blind quantum tomography in Sec.~\ref{sec:blind_quantum_state_tomography}.

For simplicity and notational clarity, we content ourselves to present the framework for two-level hierarchically sparse vectors. %
It is straight-forward to generalize the algorithmic strategies and most analytical results of this chapter to the general definition of hierarchically sparse vectors outlined above, see Ref.~\citeWE{RothEtAl:2020:HiHTP} for details. 

\section{Hierarchical thresholding and recovery algorithms}
\label{sec:hiRecoveryAlgs}

We study the linear inverse problem of recovering an $(s, \sigma)$-hierarchically sparse vector $ x \in \K^{Nn}$ from noisy linear measurements of the form
\begin{equation*}
    \label{eq:measurement}
     y =  M  x  +  e,
\end{equation*}
where $ M \in \K^{m \times Nn}$ is the linear measurement operator and $\ e \in \K^{m}$ encodes additive noise.  
The recovery task can be cast as the constraint optimization problem
\begin{equation}
    \label{eq:hi_sparse_problem}
    \operatorname*{minimize}_{ x \in \K^{Nn}}\ \frac{1}{2}\| y-  M x\|^2 \quad \text{ subject to  $x$ is $(s,\sigma)$-sparse.},
\end{equation}
where $\| y\| = \left[\sum_{i} |y_i|^2 \right]^{1/2}$ denotes the $\ell_2$-norm. 

\begin{figure}[tb]
	\centering
	\input{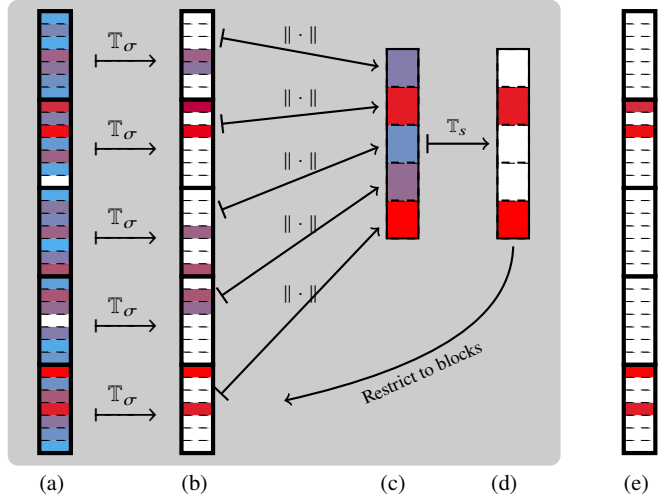}
	\caption{In this figure, the evaluation of the hierarchical thresholding operator $\T_{s,\sigma}$ is illustrated. Starting with a given dense vector (a), each block is thresholded to its best $\sigma$-sparse approximation (b). To determine the $s$ dominant blocks, the $\ell_2$-norm is calculated for each block. The resulting vector (c) of length $N$ is again thresholded to its best $s$-sparse approximation (d). The resulting blocks indicated by the $s$-sparse approximation (d) are selected from the $\sigma$-sparse approximation (b). The remaining $(s,\sigma)$-sparse support (e) is the output of $\T_{s,\sigma}$. 
	\textcopyright 2020 IEEE. Reprinted, with permission, from  Ref.~\protect\citeWE{RothEtAl:2020:HiHTP}.}
	\label{WEfig:HiThresholding}
\end{figure}

So called hard-thresholding algorithms solve the analogous optimization problem to \eqref{eq:hi_sparse_problem} for standard $s$-sparse recovery by making use of the projection of a vector $ z\in\K^n$ onto the set of $s$-sparse vectors.  
The projection onto $s$-sparse vectors, 
\begin{equation*}
    \label{eq:hard_thresholding}
    \T_s({z}) := \argmin\limits_{{x}\in\K^n} \| x- z\| \quad \text{ subject to $x$ $s$-sparse}, 
\end{equation*}
can be computed efficiently via hard thresholding, i.e.\ by setting to zero all but the $s$ largest entries in absolute value.
Note that since the set of $s$-sparse vectors in $\K^n$ is not a convex set, the projection is non-unique.  
But for the arguments made here every solution works equally well. 
Using a quick-select algorithm \citeWE{Hoare:1961}, the average computational complexity of the thresholding operation is in $O(n)$ with worst-case complexity $O(n^2)$.  

Following the blue-print of model-based compressed sensing \citeWE{BarCevDua10}, we can derive variants of standard hard-thresholding algorithms for the more restrictive sparsity structure under consideration here by modifying the thresholding operator accordingly. 
The projection of a vector ${z}\in\K^{Nn}$ onto the set $\mathcal{S}$ of $(s, \sigma)$-hierarchically sparse vectors, 
\begin{equation*}
    \label{eq:projection_operator}
    \T_{s,\sigma}({z}) = \argmin\limits_{{x}\in\mathcal{S}} \frac12 \| x- z\|^2\ ,
\end{equation*}
can be computed via \emph{hierarchical hard thresholding}: 
First, the standard hard thresholding operation $\T_\sigma$ is applied to each block.
Then, all but the $s$ blocks with largest $\ell_2$-norm are set to zero. 
The procedure is summarized as Alg.~\ref{WEalg:hierarchical_hard_thresholding} and illustrated in Fig.~\ref{WEfig:HiThresholding}. 
We find that the average computational complexity of the hierarchical thresholding operation scales as $O(Nn)$, i.e.\ linear in the overall vector space dimension as for the standard hard thresholding. 
Furthermore, the  hard thresholding and $\ell_2$-norm calculation of the different blocks can be parallelized, reducing the time-complexity to $O(\max(N, n))$. 
The hierarchical thresholding operation can be extended  recursively to general hierarchically sparse signals described in Sec.~\ref{section:signal_model} without increasing the overall computational complexity.

\begin{algorithm}[tb]
\label{WEalg:hierarchical_hard_thresholding}
\SetAlgoLined
\DontPrintSemicolon
\SetKwInOut{Input}{input}\SetKwInOut{Output}{output}
\SetKwInOut{Init}{initialize}\SetKwFunction{Break}{break}
\Input{$ z\in\K^{Nn}$, sparsity levels $(s, \sigma)$ \;} 
\For{$i \in [N]$}{
${x}_i = \T_\sigma({z}_i)$; \;
$v_i  = \| x_i\|$; \;
}
$I = \supp\left(\T_s\left((v_1, \ldots, v_N)\right)\right)$; \;
\For{$i \in [N] \setminus I$}{
${x}_i = 0$ } 
\Output{$(s, \sigma)$-hierarchically sparse vector ${x} = (x^\top_1, \ldots,  x_N^\top)^\top$}
\caption{Hierarchical hard thresholding}
\end{algorithm}

Equipped with an efficient thresholding operation, we can 
formulate recovery algorithms for hierarchically sparse signals following standard strategies. 
A particularly simple algorithm is the \emph{iterative hard thresholding algorithm} \citeWE{BlumensathDavies:2008} which performs a projected gradient descent.  
The resulting \emph{hierarchical iterative hard-thresholding algorithm} (HiIHT, Alg.~\ref{WEalg:hiiht} \citeWE{wunder2019low}) alternates gradient descent steps of the objective function \eqref{eq:hi_sparse_problem} with the hard-thresholding operation $\T_{s,\sigma}$.

\begin{algorithm}[tb]
\label{WEalg:hiiht}
\SetAlgoLined
\DontPrintSemicolon
\SetKwInOut{Input}{input}\SetKwInOut{Output}{output}
\SetKwInOut{Init}{initialize}\SetKwFunction{Break}{break}
\Input{data ${y}\in\K^m$, measurement operator ${M}\in\K^{m\times Nn}$, sparsity levels $(s,\sigma)$}
\Init{${x}^{(0)} = 0$}
\Repeat{stopping criterion is met at $t = t^\ast$}{
$\bar{{x}}^{(t)} = {x}^{(t-1)} + \tau^{(t)} {M}^\ast\left(y-{M}{x}^{(t-1)}\right)$; \;
${x}^{(t)} = \T_{s,\sigma}\left(\bar{{x}}^{(t)} \right)$; \;
}
\Output{$(s,\sigma)$-sparse vector $ x^{(t^\ast)}$}
\caption{HiIHT algorithm}
\end{algorithm}

Here, $\tau^{(t)}$ is a suitably chosen stepsize. 
The original IHT algorithms uses constant steps $\tau^{(t)} = 1$ for all $t$. 
Alternative strategies include backtracking as in the normalized iterative hard thresholding (NIHT) algorithm \citeWE{BlumensathDavies:2009}.

Faster convergence can be achieved with an adaption of the hard-thresholding pursuit \citeWE{Foucart:2011} to hierarchical sparsity, the HiHTP \citeWE{RothEtAl:2016:Proceedings, RothEtAl:2020:HiHTP}.  
Compared to the HiIHT, the HiHTP algorithm uses the result of the thresholded gradient-step as a proxy to guess the support of the correct solution in each step. 
Subsequently, a linear least-squares problem is 
solved on the support guess. 
The solution can be computed via pseudo-inverse or an approximate method. 
Notably, with this modification, if the algorithm finds the correct solution, it does this in a finite number of steps to the precision of the least-squares problem solver.
The HiHTP algorithm is given as Alg.~\ref{WEalg:hihtp}.

\begin{algorithm}[tb]
\label{WEalg:hihtp}
\SetAlgoLined
\SetKwInOut{Input}{input}\SetKwInOut{Output}{output}
\SetKwInOut{Init}{initialize}\SetKwFunction{Break}{break}
\DontPrintSemicolon
\Input{data ${y}\in\K^m$, measurement operator ${M}\in\K^{m\times Nn}$, sparsity levels $(s,\sigma)$}
\Init{${x}^{(0)} = 0$}
\Repeat{stopping criterion is met at $t = t^\ast$}{
$\bar{x}^{(t)} = x^{(t-1)} + \tau^{(t)} M^*\left(y-{M}{x}^{(t-1)}\right)$; \;
$I^{(t)} = \supp\left(\T_{s,\sigma}\left(\bar{{x}}^{(t)} \right)\right)$; \;
${x}^{(t)} = \argmin\limits_{{x}} \frac12 \|{y}-{M}{x}\|^2 \quad$ subject to $\quad \supp(x)\subseteq I^{(t)}$; 
}
\Output{$(s,\sigma)$-sparse vector $ x^{(t^\ast)}$}
\caption{HiHTP algorithm}
\end{algorithm}

The computational complexity of both algorithms, HiIHT and HiHTP, is typically dominated by the matrix vector multiplication with the measurement matrix $ M$ and $ M^\ast$, scaling in general as $O(mNn)$. If a fast matrix vector multiplication is available for the measurement matrix, this scaling can be significantly improved.  

The additional least-square solution in the HiHTP algorithm contributes $O(s\sigma m^2)$ operations.
For this reason, HiIHT can be faster per iteration than the HiHTP in certain parameter regimes.  
Note that the computational complexity, featuring the overall vector space dimension $Nn$ and the total sparsity $s\sigma$, is identical to the complexity of the original IHT and HTP algorithms. 

Modifications using hierarchically sparse thresholding can also be applied to other compressed sensing algorithms such as the \emph{CoSAMP}~\citeWE{Needell08}, the \emph{Subspace Pursuit}~\citeWE{DaiMilenkovic:2009} or 
\emph{Orthogonal Matching Pursuit}, see e.g., Refs.\  \citeWE{Tropp:2004,LiuSun:2011} and references therein.
Proximal operators of the convex relaxations of the problem \eqref{eq:hi_sparse_problem} can be calculated using soft-thresholding operations yielding a hierarchical version of the LASSO algorithms \citeWE{SprechmannEtAl:2010,FriedmanEtAl:2010, SprechmannEtAl:2011, SimonEtAl:2013}. 
Due to their similarity, the HiIHT and HiHTP algorithms inherit their convergence proofs and recovery guarantees  with slight modifications from their non-hierarchical counterparts. 
To this end, we make use of the variant of the \emph{restricted isometry property (RIP)} \citeWE{candes2005decoding} adapted to hierarchically sparse signals. 

\begin{definition}[Hierarchical restricted isometry property (HiRIP)]
Given a linear operator ${M}: \K^{Nn}\rightarrow \K^m$, we denote by $\delta_{s,\sigma}$ the smallest constant such that 
\begin{equation*}
    (1-\delta_{s,\sigma})\|{x}\|^2 \leq \|{M}{x}\|^2\leq (1+\delta_{s,\sigma})\|{x}\|^2
\end{equation*}
holds for all $(s,\sigma)$-hierarchically sparse vectors ${x}\in\K^{Nn}$.
\end{definition}

We will also refer to the standard $s$-sparse RIP constant $\delta_s$, defined analogously with the bounds holding for all $s$-sparse vectors. 
The standard RIP constant dominates the HiRIP constant as $\delta_{s\sigma} \geq \delta_{s, \sigma}$ since $\mathcal{S}_{s, \sigma}$ is a subset of the set of $s\cdot \sigma$-sparse vectors. 
But as we will see below, using the HiRIP allows for a considerably more fine-grained analysis, yielding improvements in the sampling complexity.

In terms of a HiRIP condition, we can guarantee a robust and stable convergence to the correct solution for the hierarchical hard thresholding algorithms. 
To this end, given $ x \in \K^{Nn}$ and a support set $\Omega \subset [N] \times [n]$, we denote by ${x}\rfloor_\Omega$ the projection of $ x$ onto the subspace of $\K^{Nn}$ indicated by $\Omega$. 

\begin{theorem}[Recovery guarantee for HiIHT and HiHTP \citeWE{wunder2019low,RothEtAl:2020:HiHTP}]
Suppose the measurement operator ${M}: \K^{Nn}\rightarrow \K^m$ satisfies the HiRIP condition
\begin{equation*}
    \delta_{3s, 2\sigma}< \delta_*,
\end{equation*}
where $\delta_*$ is a threshold, equal to ${1}/{\sqrt{3}}$ for the HiHTP-algorithm and equal to $\sqrt{2}-1$ for the HiIHT-algorithm. Then, for ${x}\in\K^{Nn}$, ${e}\in\K^m$ and $\Omega\subset[N]\times [n]$ an $(s,\sigma)$-hierarchically sparse support set, the sequence $({x}^k)_k$ defined by by HiIHT (Alg.~\ref{WEalg:hiiht}) or HiHTP (Alg.~\ref{WEalg:hihtp}), respectively, with ${y} = {M}{x}\rfloor_\Omega+{e}$ satisfies, for any $k\geq 0$,
\begin{equation*}
    \|{x}^k-{x}\rfloor_\Omega\|\leq \rho^k\|{x}^0-{x}\rfloor_\Omega\| + \tau \|e\|,
\end{equation*}
where the constants $\rho$ and $\tau$ depend on which algorithm is used: For HiIHT
\begin{equation*}
    \rho^{\text{HiIHT}} =  \sqrt{3} \delta_{3s,2\sigma}, \quad \tau^{\text{HiIHT}} = \frac{2.18}{1-\rho^{\text{HiIHT}}},
\end{equation*}
whereas for HiHTP,
\begin{equation*}
      \rho^{\text{HiHTP}} = \left(\frac{2\delta_{3s, 2\sigma}}{1-\delta^2_{(2s, 2\sigma)}}\right)^{1/2} , \quad \tau^{\text{HiHTP}} = \frac{5.15}{1-\rho^{\text{HiHTP}}}.
\end{equation*}
\end{theorem}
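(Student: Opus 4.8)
The plan is to reduce the theorem to a single--iteration contraction estimate and then unroll it. Writing $\hat{x} := x\rfloor_\Omega$, the hypothesis gives $y = M\hat{x} + e$ with $\hat{x}$ an $(s,\sigma)$--sparse vector, so it suffices to prove a one--step bound of the form $\|x^{(k)} - \hat{x}\| \le \rho\,\|x^{(k-1)} - \hat{x}\| + c\,\|e\|$ with $\rho < 1$; iterating and summing the geometric series then yields $\|x^{(k)} - \hat{x}\| \le \rho^k\|x^{(0)} - \hat{x}\| + \frac{c}{1-\rho}\|e\|$, which is the asserted inequality with $\tau = c/(1-\rho)$ and accounts for the explicit numerical constants once the noise amplification $c$ is tracked. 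Both one--step bounds are obtained by transcribing the classical IHT \citeWE{BlumensathDavies:2008} and HTP \citeWE{Foucart:2011} analyses (see \citeWE{FouRau13}) into the hierarchical setting. Two structural facts enable this transcription. First, by correctness of Alg.~\ref{WEalg:hierarchical_hard_thresholding}, $\T_{s,\sigma}$ returns an \emph{exact} best $(s,\sigma)$--sparse approximation, so it plays the role ordinary hard thresholding plays classically. Second, the family of $(s,\sigma)$--sparse supports is closed under unions with controlled growth, the union of an $(s_1,\sigma_1)$-- and an $(s_2,\sigma_2)$--sparse support being $(s_1+s_2,\sigma_1+\sigma_2)$--sparse; this is what lets every cross term of the form $\langle (\id - M^\ast M)u, v\rangle$ in the classical proof be controlled by a single HiRIP constant of the appropriate order.

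For HiIHT I would argue as follows. Setting $\bar{x}^{(k)} = x^{(k-1)} + M^\ast(y - Mx^{(k-1)})$ and $a = x^{(k)} - \hat{x}$, the best--approximation property of $\T_{s,\sigma}$ gives $\|\bar{x}^{(k)} - x^{(k)}\| \le \|\bar{x}^{(k)} - \hat{x}\|$; restricting to the joint support $T = \supp(x^{(k)}) \cup \Omega$, where both $x^{(k)}$ and $\hat{x}$ live, and expanding the square produces the variational inequality $\|a\|^2 \le 2\,\Re\langle (\bar{x}^{(k)} - \hat{x})\rfloor_T, a\rangle$. Substituting $\bar{x}^{(k)} - \hat{x} = (\id - M^\ast M)(x^{(k-1)} - \hat{x}) + M^\ast e$ splits the right--hand side into a HiRIP--controlled term, handled via $\langle (\id - M^\ast M)w, a\rangle = \langle w,a\rangle - \langle Mw, Ma\rangle$ and the constant for the union of supports, and a noise term bounded by $\sqrt{1 + \delta_{2s,2\sigma}}\,\|e\|$. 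The sharpened handling of the quadratic then delivers $\rho^{\mathrm{HiIHT}} = \sqrt{3}\,\delta_{3s,2\sigma}$ and the constant behind $\tau^{\mathrm{HiIHT}}$.

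For HiHTP I would split the one--step error into the two standard pieces. Part~(a) bounds $\|x^{(k)} - \hat{x}\|$ through the normal equations of the least--squares step on the guessed support $I^{(k)}$: since $(M^\ast(y - Mx^{(k)}))\rfloor_{I^{(k)}} = 0$, restricting to $\Omega \cup I^{(k)}$, which is $(2s,2\sigma)$--sparse, produces the factor $1/\sqrt{1 - \delta_{2s,2\sigma}^2}$ and reduces everything to the tail $\|\hat{x}\rfloor_{\Omega \setminus I^{(k)}}\|$. Part~(b) bounds this tail: because $I^{(k)} = \supp(\T_{s,\sigma}(\bar{x}^{(k)}))$ is an optimal $(s,\sigma)$--support for $\bar{x}^{(k)}$ while $\Omega$ is a competing one, optimality yields $\|\bar{x}^{(k)}\rfloor_{\Omega \setminus I^{(k)}}\| \le \|\bar{x}^{(k)}\rfloor_{I^{(k)} \setminus \Omega}\|$, and inserting $\bar{x}^{(k)} = \hat{x} + (\id - M^\ast M)(x^{(k-1)} - \hat{x}) + M^\ast e$ on the symmetric difference $\Omega \triangle I^{(k)}$ controls the tail by a multiple of $\delta_{3s,2\sigma}\,\|x^{(k-1)} - \hat{x}\|$ plus a noise contribution. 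Combining (a) and (b) gives $\rho^{\mathrm{HiHTP}}$ and $\tau^{\mathrm{HiHTP}}$, and the contraction $\rho^{\mathrm{HiHTP}} < 1$ is exactly what the threshold $\delta_\ast = 1/\sqrt{3}$ secures.

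The step I expect to be the genuine obstacle is the combinatorial accounting that keeps the \emph{second--level} sparsity at $2\sigma$ rather than the $3\sigma$ a naive union bound would force. The block--level count $3s$ is immediate, since the three relevant supports $\Omega$, $\supp(x^{(k-1)})$ and $I^{(k)}$ (respectively $\supp(x^{(k)})$) occupy at most $3s$ blocks. Within a single block, however, these three each carry up to $\sigma$ coordinates, and their union is a priori $3\sigma$--sparse; the terms restricted to $\Omega$ cost nothing extra within a block because $\Omega$'s per--block support is already subsumed by that of $x^{(k-1)} - \hat{x}$, but the cross terms restricted to $I^{(k)} \setminus \Omega$ introduce a third, genuinely new per--block support. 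Reducing these to order $2\sigma$ is where the hierarchical structure, and not a mere symbolic substitution into the classical proof, must be exploited: one has to use that $\T_{s,\sigma}$ thresholds each block to its best $\sigma$--term approximation \emph{before} selecting blocks, and regroup the offending inner products blockwise so that only two $\sigma$--sparse within--block supports ever meet a given application of the isometry. Verifying that this regrouping is compatible with the HiRIP of order $(3s,2\sigma)$ is the technical heart of the argument.
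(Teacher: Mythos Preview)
Your plan coincides with what the paper actually does: the paper gives no self-contained argument but simply states that the proof follows the standard IHT and HTP analyses (as in Foucart--Rauhut and Foucart~2011) once sparsity, hard thresholding, and $\delta_s$ are replaced by $(s,\sigma)$-sparsity, $\T_{s,\sigma}$, and the HiRIP constant, and points to the cited references for the details. The two structural facts you isolate---that $\T_{s,\sigma}$ is an exact best $(s,\sigma)$-sparse approximation, and that hierarchically sparse supports are closed under union with additive parameters---are precisely the hinges on which that transcription turns, so at the level of strategy there is nothing to add.

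Regarding the obstacle you single out: you are right that the direct transcription produces a $(3s,3\sigma)$-sparse joint support, since $\Omega$, $\supp(x^{(k-1)})$ and $\supp(x^{(k)})$ (respectively $I^{(k)}$) can all be active in the same block with three genuinely different $\sigma$-sparse in-block patterns, and the cross term $\langle(\id-M^*M)(x^{(k-1)}-\hat{x}),\,x^{(k)}-\hat{x}\rangle$ is controlled by the HiRIP constant on their union. However, the mechanism you propose for reducing this to $2\sigma$---that $\T_{s,\sigma}$ thresholds each block before selecting blocks, allowing a blockwise regrouping so that only two $\sigma$-supports meet in any single HiRIP application---does not actually close the gap: the per-block thresholding fixes the in-block support of $x^{(k)}$ but does nothing to relate it to the in-block supports of $x^{(k-1)}$ or $\hat{x}$, so three independent $\sigma$-patterns can still coexist. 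In other words, the straightforward adaptation you (and the paper's sketch) describe yields the guarantee under $\delta_{3s,3\sigma}<\delta_*$; obtaining the stated $2\sigma$ is not a matter of regrouping inside the classical argument, and you should either prove the $(3s,3\sigma)$ version or consult the detailed proofs in the primary references rather than treat the sharper constant as something the transcription will automatically deliver.
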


The theorem's proof follows closely along the lines of the standard proofs for HTP and IHT as found, e.g.\ in Refs.~\citeWE{FouRau13, Foucart:2011, BouchotEtAl:2016}.  A detailed proof can be found in Refs.~\citeWE{wunder2019low, RothEtAl:2020:HiHTP}, respectively.

\section{Hierarchically restricted isometric measurements}
\label{sec:hiRIP}

The results of the last section make it clear that the HiRIP-property has the same role for hierarchically sparse recovery as the RIP takes on for sparse recovery. If we can prove that an operator $A$, for appropriate hi-sparsity levels $(s,\sigma)$, has the HiRIP, it is guaranteed that HiHTP can recover $x$ from the measurements $Ax$. In this chapter, we will establish the HiRIP for several families of measurement operators, using more and more specialized techniques.

\subsection{Gaussian operators} \label{sec:Gaussian}
Let us first discuss the HiRIP-properties of the arguably most well-known random  construction of a measurement operator: The Gaussian random matrix. 
A random matrix $A \in \mathbb{K}^{m\times n}$ is thereby said to be \emph{Gaussian} if the entries are i.i.d.\ distributed according to the standard normal distribution $\mathcal{N}(0,1)$.

It has become a folklore result (see, e.g., Ref.\  \citeWE[Ch.9]{FouRau13}) that if $A$ is Gaussian, the renormalized matrix $\tfrac{1}{\sqrt{m}}A$ has the $s$-RIP with high probability if 
\begin{align*}
   m \gtrsim s \log\left(\frac{n}{s}\right),
\end{align*}
where the notation $\gtrsim f(x)$  
means greater than $C\cdot f(x)$, with $C$ an unspecified universal numerical constant. 
It is therefore natural to ask how large $m$ needs to be in order for $\frac{1}{\sqrt{m}}A$ to have the $(s,\sigma)$-HiRIP. 
Since $(s,\sigma)$-sparsity is more restrictive than $s\sigma$-sparsity, we surely will not need more than 
$\text{const.} \cdot s\sigma \cdot \log\left(\tfrac{s\sigma}{nN}\right)$ 
measurements. 
But is the threshold lower for the HiRIP? And if so, how much?  

In fact, the framework of \emph{model-based compressed sensing} \citeWE{BarCevDua10}
gives us a standard route to answer this question for the Gaussian ensemble. Let us sketch this route in some detail. First, one realizes that for any normalized fixed $x \in \mathbb{K}^N$, the random vector $\tfrac{1}{\sqrt{m}}Ax$ is also Gaussian, and as such obeys the following \emph{measure concentration inequality}: 
\begin{equation*}
    \mathbb{P}\left(\,\left\vert \left\Vert \tfrac{1}{\sqrt{m}} A x \right\Vert^2 - 1 \right\vert > \delta \,\right) \leq 2 \exp\left(-cm \delta^2 \right),
\end{equation*}
where $c$ is a numerical constant. For a fixed vector $x\in \mathbb{K}^n$, $\tfrac{1}{\sqrt{m}}A$ preserves its norm with high probability.

Second, we generalize the almost isometric behaviour to hold for all vectors supported on a certain $k$-dimensional subspace $V$. 
To this end, we first establish that it suffices that the measurement operator acts almost isometrically on a so-called \emph{$\rho$-net} for the intersection of the Euclidean unit ball with $V$. 
A $\rho$-net for a set $M$ is a set $N$ with the property that for any $q \in M$, there exists a $p \in N$  with $\Vert{q-p}\Vert_2 <\rho$. 
It is not hard to construct a $\rho$-net for the normalized elements of $V$ with cardinality \citeWE{FouRau13}
\begin{equation*}
    \vert{N}\vert \leq C_{\mathrm{net}}\left(1+\tfrac{2}{\rho}\right)^k \, .
\end{equation*} 
By choosing $\rho$ suitably and applying a union bound over the $\rho$-net, we obtain for any support $S$ with $\vert{S}\vert=k$
\begin{equation} \label{eq:conc}
    \mathbb{P}\left(\, \left\vert \left\Vert \tfrac{1}{\sqrt{m}} A x \right\Vert^2 - 1 \right\vert > \delta \quad \forall x : \operatorname{supp}(x) =S \right) \leq C \lambda^k \exp\left(-\tilde{c}m \delta^2 \right),
\end{equation}
where $C$, $\lambda$ and $\tilde{c}$ are universal numerical constants. 
With \eqref{eq:conc} at our disposal, it is only one step to establish an isometry property for $\tfrac{1}{\sqrt{m}}A\in \R^{m \times N\cdot n}$ for an entire union of subspaces such as structured sparse vectors. 
For instance, in order to get the $(s,\sigma)$-HiRIP, we need to take a union bound over all $(s,\sigma)$-sparse supports $S$. There are $\binom{N}{s}\binom{n}{\sigma}^s$ such supports. Therefore
\begin{equation*}
    \mathbb{P}\left(\, \left\vert \left\Vert \tfrac{1}{\sqrt{m}} A x \right\Vert^2 - 1 \right\vert > \delta \quad \forall\ \text{$(s,\sigma)$-sparse $x$} \right) \leq C \binom{N}{s}\binom{n}{\sigma}^s \lambda^{s\sigma} \exp\left(-\tilde{c}m \delta^2\right)\, .
\end{equation*}
This probability is dominated by $\epsilon$, if
\begin{equation*} 
    m \geq \tilde{c}^{-1} \delta^{-2} \log \left(C \binom{N}{s}\binom{n}{\sigma}^s \lambda^{s\sigma}\epsilon^{-1}\right).
\end{equation*}
Using the Stirling approximation $\binom{p}{k} \sim \left(\tfrac{p}{k}\right)^k$, we obtain the more readable condition
\begin{equation*}
    m \gtrsim  \delta^{-2} \left( s \log\left(\tfrac{N}{s}\right) + s\sigma \log\left(\tfrac{n}{\sigma}\right) + \log\left(\tfrac1\epsilon\right)\right) \, .
\end{equation*}

Let us state this as a theorem.
\begin{theorem}[HiRIP for Gaussian matrices]
    Let $A\in \mathbb{K}^{m,n\cdot N}$ be random Gaussian. Then there is a universal numerical constant $C>0$ so that if
    \begin{equation} \label{eq:nbrmeasurments}
        m \geq  \tfrac{C}{\delta^2} \left(s \log\left(\tfrac{N}{s}\right) + s\sigma \log\left(\tfrac{n}{\sigma}\right) + \log\left(\tfrac1\epsilon\right)\right),
    \end{equation}
    $\tfrac{1}{\sqrt{m}}A$ has an $(s,\sigma)$-HiRIP constant $\delta_{s,\sigma}(A)\leq \delta$ with  probability as least $1-\epsilon$.
\end{theorem}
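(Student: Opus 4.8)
The plan is to follow the standard \emph{model-based compressed sensing} route already outlined above; what remains for a rigorous proof is to assemble the ingredients and track the constants. The guiding observation is that the set $\mathcal{S}^{N,n}_{s,\sigma}$ is a finite union of linear subspaces, so it suffices to control the distortion of $\tfrac{1}{\sqrt m}A$ on each subspace separately and then pay only a logarithmic price in the number of subspaces through a union bound.

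First I would fix an $(s,\sigma)$-sparse support $S\subset[N]\times[n]$ and consider the subspace $V_S$ of vectors supported on $S$, which has dimension $k=s\sigma$. At this stage the goal is to upgrade the pointwise concentration inequality into a statement holding uniformly for all $x\in V_S$. To this end I would introduce a $\rho$-net $\mathcal{N}_\rho$ of the Euclidean unit sphere of $V_S$ with $|\mathcal{N}_\rho|\leq C_{\mathrm{net}}(1+2/\rho)^k$, apply the pointwise bound together with a union bound over $\mathcal{N}_\rho$, and then invoke the standard successive-approximation argument (as in Ref.~\citeWE{FouRau13}) to transfer almost-isometry on the net to almost-isometry on the whole of $V_S$. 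Choosing $\rho$ to be a fixed small constant absorbs the net cardinality $(1+2/\rho)^k$ into a factor $\lambda^k$ and reproduces exactly the bound \eqref{eq:conc}.

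Next I would union-bound over all $(s,\sigma)$-sparse supports. Counting the choices of $s$ active blocks and their $\sigma$-sparse within-block supports gives $\binom{N}{s}\binom{n}{\sigma}^s$ supports, each of dimension $k=s\sigma$, so the total failure probability is at most $C\binom{N}{s}\binom{n}{\sigma}^s\lambda^{s\sigma}\exp(-\tilde c m\delta^2)$. Requiring this to be at most $\epsilon$ and taking logarithms gives $m\geq \tilde c^{-1}\delta^{-2}\log(C\binom{N}{s}\binom{n}{\sigma}^s\lambda^{s\sigma}\epsilon^{-1})$, and applying Stirling's estimate $\binom{p}{k}\sim(p/k)^k$ to both binomials then yields the claimed bound \eqref{eq:nbrmeasurments}; the three summands $s\log(N/s)$, $s\sigma\log(n/\sigma)$ and $\log(1/\epsilon)$ originate respectively from the choice of active blocks, the within-block supports, and the confidence level. (Over $\mathbb{K}=\mathbb{C}$ the real dimension of $V_S$ is $2s\sigma$, which only affects the universal constants.)

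The main obstacle is the net-to-subspace step. The concentration inequality controls $\tfrac{1}{\sqrt m}A$ only at the finitely many points of the net, and converting this into a uniform bound over the entire subspace requires the successive-approximation argument that is responsible for the extra exponential factor $\lambda^k$ in \eqref{eq:conc}. Care is needed here to ensure that $\lambda$ remains a genuine universal constant, independent of $\delta,s,\sigma,N$ and $n$; the remaining steps---counting supports, the union bound, and the Stirling simplification---are then routine.
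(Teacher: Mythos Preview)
Your proposal is correct and follows essentially the same route as the paper: pointwise Gaussian concentration, a $\rho$-net covering argument to obtain uniform control on each $s\sigma$-dimensional support subspace, a union bound over the $\binom{N}{s}\binom{n}{\sigma}^s$ hierarchically sparse supports, and Stirling's approximation to simplify. The paper's presentation is slightly more informal but the structure and constants match yours step for step.
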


The difference of the condition \eqref{eq:nbrmeasurments} compared to one needed to establish the standard RIP,
\begin{equation} \label{eq:ripnbr}
    m \geq  \tfrac{C}{\delta^2} \left(s\sigma \log\left(\tfrac{Nn}{s\sigma}\right) +  \log\left(\tfrac 1\epsilon\right)\right)
\end{equation}
is subtle. After all, both thresholds can be written as $s \sigma$ multiplied with logarithmic terms in the dimension of surrounding space. However, for certain parameter regimes, the difference is significant. Indeed, in the scenario that $N \gg n$, \eqref{eq:nbrmeasurments} can be much smaller than \eqref{eq:ripnbr}.
This establishes that for Gaussian random matrices hierarchical thresholding algorithms are theoretically expected to have an improved sampling complexity compared to their standard counterparts. 
Also in the non-asymptotic regime one can observe an improved sample requirement in numerical simulation, see Fig.~\ref{WEfig:samplingComplexityGauss}.

\begin{figure}[tb]
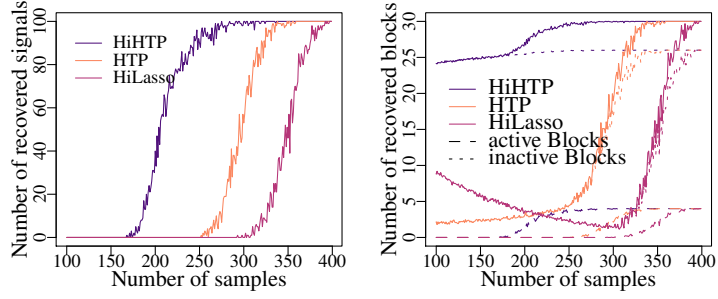

	\centering
	\input{images/plotGaussianComplete}
    \hspace{.3cm}
	\input{images/plotGaussianBlocks}
	\caption{\label{WEfig:samplingComplexityGauss}%
	Left: Number of recovered signals from $100$ noiseless Gaussian samples over the number of measurements $m$ for {HTP}, {HiLasso} and HiHTP. The signals consist of $N=30$ blocks of size $n=100$ with $s=4$ blocks having $\sigma=20$ non-vanishing real entries. Right: Number of recovered blocks over the number of measurements $m$ for {HTP} and HiHTP. The dashed and dotted lines indicate the average number of correctly recovered zero and non-zero blocks, respectively. The solid lines show the total average number of recovered blocks. The signals consist of $N=30$ blocks with $s=4$ blocks having non-vanishing real entries. A signal or block is considered recovered if it deviates from the true signal by less than $10^{-5}$ in $\ell_2$-norm. \textcopyright 2020 IEEE. Reprinted, with permission, from  Ref.~\protect\citeWE{RothEtAl:2020:HiHTP}.}
\end{figure}
    Note that the above discussion can be applied without problems to \emph{sub-Gaussian} matrices. A matrix is sub-Gaussian if the entries $a_{i,j}$ are 
    i.i.d.\ distributed according to a distribution that obeys
        $\mathbb{P}\left(\vert{a_{i,j} } \vert >t\right) \leq \alpha \exp\left(-\beta t^2\right)$
    for some $\alpha$, $\beta>0$.

\subsection{Coherence measures}
The discussion in the last section very much relies on the random nature of the measurement operator. This is a common feature of compressed sensing-related theories -- in order to obtain an optimal scaling, one practically has no choice other than to use a random construction. A possible route to still establish (non-optimal) RIP-results for non-random matrices is to take a detour via so-called \emph{coherence measures}. The simplest result is as follows \citeWE[Prop 6.2]{FouRau13}: 
If we define the \emph{mutual coherence} of a matrix with normalized columns $a_i$ as
\begin{equation*}
    \mu(A) = \sup_{i,j} \vert \langle a_i, a_j \rangle \vert,
\end{equation*}
the RIP constants obey 
\begin{equation}
    \delta_s(A) \leq (s-1)\mu(A). \label{eq:coherence}
\end{equation}

To establish analogous results for the HiRIP constants, we need to use coherence measures adapted to the block structure. 
Such measures have been introduced in Ref.\  \citeWE{SprechmannEtAl:2011} for the analysis of the HiLasso algorithm. 
To work with these coherence measures it is convenient to introduce further notation 
to refer to the blocks of a vector individually. 
To this end, we use the Kronecker product of matrices in the convention 
\begin{equation*}
  A \otimes B = \begin{pmatrix} 
    a_{1,1} B& \ldots & a_{1,N}B  \\
    \vdots & \ddots & \vdots \\
    a_{m,1} B & \ldots & a_{m,N}B
  \end{pmatrix}, 
\end{equation*}
where $a_{i,j}$ denotes the entries of $A$. 
The Kronecker product trivially also provides a Kronecker product on vectors $\K^N \times \K^n \to \K^{Nn}$ understood as $n \times 1$ and $N \times 1$ matrices, respectively. 
Using the basis $\{e_i\}_{i \in [N]}$, $(e_i)_j =\delta_{i,j}$ of $\K^N$,
we can rewrite a blocked vector $x \in \K^{Nn}$ with blocks $x_i \in \K^n$, $i \in [N]$, as the sum of products $x = (x_1^\top, x_2^\top \ldots, x_N^\top)^\top = \sum_{i \in  [N]} e_i \otimes x_i$. 
The Kronecker product exemplifies the canonical vector space isomorphism of $\K^{Nn}$ with the tensor product space $\K^N \otimes \K^n$. 
Analogously, we identify the measurement matrices $A\in \K^{m \times N\cdot n}$ with linear operators $A : \K^N \otimes \K^n \to \K^m$. 
We refer to $A_i \in  \mathbb{K}^{m \times n}$, $i\in [N]$, defined through $A_i(v) = A(e_i \otimes v)$, $v \in \K^n$, as the \emph{block-operators} of $A$. Now we introduce the specialized coherence measures. 

\begin{definition}[Sub-coherence and block-coherence]
    Let $A: \mathbb{K}^{N} \otimes  \mathbb{K}^n \to  \mathbb{K}^{m}$ with block-operators $A_i \in \K^{m \times {n}}$ and let $\{a_{i,j}\}_{j \in [n]}$ be the columns of the $i$th block-operator. We  define
    \begin{enumerate}
        \item the \emph{sub-coherence} $\nu(A)$ of $A$ as the maximal mutual coherence of the block operators, i.e.,
       \begin{equation*}
           \nu(A) = \sup_i \mu(A_i) =  \sup_i \sup_{j \neq k} \vert \langle a_{i,j},a_{i,k} \rangle \vert.
        \end{equation*}
        \item The \emph{sparse block-coherence} $\mu_{\mathrm{block}}^{\sigma\!\sigma}(A)$ of $A$ as
        \begin{equation*}
            \mu_{\mathrm{block}}^{\sigma\!\sigma}(A) = \sup_{i \neq j} \rho^{\sigma\!\sigma}(A_i^*A_j),
       \end{equation*}
        where $\rho^{\sigma\!\sigma}(B)$ denotes the \emph{$\sigma$-sparse singular value} of a matrix $B \in \mathbb{K}^{N\times N}$,
        \begin{equation*}
            \rho^{\sigma\!\sigma}(B) = \sup_{\substack{u,v  \text{ $\sigma$-sparse} \\ \Vert u \Vert = \Vert v \Vert=1}} \vert \langle u, Bv \rangle \vert.
        \end{equation*}
    \end{enumerate}
\end{definition}

Intuitively, $\nu(A)$ measures the coherence within each block, whereas $\mu_{\mathrm{block}}^{\sigma\!\sigma}(A)$ measures the coherence between the blocks. 
Note that we have used a different normalization in the definition of the sparse block-coherence compared to Ref.\ \citeWE{SprechmannEtAl:2011}. 
We can establish the following bounds on the HiRIP constants in terms of the coherence measures.

\begin{theorem}[HiRIP through coherence bound] Let $A :
\label{th:blockcoherence}\mathbb{K}^N \otimes \mathbb{K}^n \to \mathbb{K}^m$ be an operator with block-operators $A_i$ and $s\in [N]$, $\sigma \in [n]$.
It holds that
    \begin{enumerate}
        \item \quad 
            $\sup_{i} \delta_{\sigma}(A_i) \leq \delta_{1,\sigma}(A)$ \quad and\quad  
            $\mu^{\sigma\!\sigma}_{\mathrm{block}}(A) \leq  2\delta_{2,\sigma}(A)$.
        \vspace{.2\baselineskip}
        \item \quad
        $
        \delta_{s,\sigma}(A) \leq \sup_{i} \delta_{\sigma}(A_i)  + (s-1) \mu^{\sigma\!\sigma}_{\mathrm{block}}(A).
        $
    \end{enumerate}
    In addition, if all columns of the block-operators $A_i$ are normalized, then
    \begin{equation*}
         \delta_{s,\sigma}(A) \leq (\sigma-1) \nu(A) + (s-1) \mu^{\sigma\!\sigma}_{\mathrm{block}}(A)\, .
    \end{equation*}
\end{theorem}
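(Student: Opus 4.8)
The plan is to reduce all four inequalities to a single algebraic expansion of $\|Ax\|^2$. Writing an arbitrary $(s,\sigma)$-sparse vector as $x = \sum_{i \in S} e_i \otimes x_i$, where $S \subseteq [N]$ with $\vert S\vert \leq s$ and each $x_i \in \mathbb{K}^n$ is $\sigma$-sparse, I would use $Ax = \sum_{i \in S} A_i x_i$ together with $\|x\|^2 = \sum_{i \in S}\|x_i\|^2$ to obtain
\begin{equation*}
\|Ax\|^2 - \|x\|^2 = \sum_{i \in S}\bigl(\|A_i x_i\|^2 - \|x_i\|^2\bigr) + \sum_{\substack{i,j \in S\\ i \neq j}} \langle A_i x_i, A_j x_j\rangle .
\end{equation*}
This identity cleanly separates the two coherence quantities: the diagonal terms are controlled by the sub-operators' $\sigma$-RIP constants, and the off-diagonal terms by the block coherence.

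For the additive bound (item 2) I would estimate the diagonal sum by $\sup_i \delta_\sigma(A_i)\sum_{i\in S}\|x_i\|^2$, using that each $x_i$ is $\sigma$-sparse, and each off-diagonal term via $\langle A_i x_i, A_j x_j\rangle = \langle x_i, A_i^* A_j x_j\rangle$ and the definition of the $\sigma$-sparse singular value, giving $\vert\langle A_i x_i, A_j x_j\rangle\vert \leq \mu_{\mathrm{block}}^{\sigma\!\sigma}(A)\,\|x_i\|\,\|x_j\|$ for $i\neq j$. Writing $t_i = \|x_i\|$, the combinatorial step is $\sum_{i \neq j} t_i t_j = \bigl(\sum_i t_i\bigr)^2 - \sum_i t_i^2 \leq (\vert S\vert-1)\sum_i t_i^2 \leq (s-1)\|x\|^2$ by Cauchy–Schwarz. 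Collecting terms yields $\bigl\vert\|Ax\|^2 - \|x\|^2\bigr\vert \leq \bigl[\sup_i \delta_\sigma(A_i) + (s-1)\mu_{\mathrm{block}}^{\sigma\!\sigma}(A)\bigr]\|x\|^2$ for every $(s,\sigma)$-sparse $x$; since $\delta_{s,\sigma}(A)$ is by definition the smallest admissible constant, item 2 follows.

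For item 1, the first inequality is immediate: restricting the HiRIP-defining inequality to vectors $e_i \otimes v$ with $v$ being $\sigma$-sparse reproduces exactly the $\sigma$-RIP inequality for $A_i$, so $\delta_\sigma(A_i) \leq \delta_{1,\sigma}(A)$ for each $i$, and we take the supremum. The second inequality is the delicate step and the one I expect to require the most care, particularly over $\mathbb{K}=\mathbb{C}$. Fixing $i \neq j$ and $\sigma$-sparse unit vectors $u,v$, I would first rotate $v$ by a unit phase so that $\langle A_i u, A_j v\rangle$ is real and nonnegative, then apply the upper HiRIP bound to $w = e_i \otimes u + e_j \otimes v$, which is $(2,\sigma)$-sparse with $\|w\|^2 = 2$. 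Expanding $\|Aw\|^2 = \|A_i u\|^2 + \|A_j v\|^2 + 2\,\Re\langle A_i u, A_j v\rangle$ and inserting the lower bounds $\|A_i u\|^2, \|A_j v\|^2 \geq 1 - \delta_{2,\sigma}(A)$ (which use the monotonicity $\delta_{1,\sigma}(A)\leq \delta_{2,\sigma}(A)$, as $(1,\sigma)$-sparse vectors are $(2,\sigma)$-sparse) leaves $2\langle A_i u, A_j v\rangle \leq 4\delta_{2,\sigma}(A)$. Taking the supremum over $u,v$ and over $i \neq j$ gives $\mu_{\mathrm{block}}^{\sigma\!\sigma}(A) \leq 2\delta_{2,\sigma}(A)$.

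Finally, the normalized-column estimate follows by combining item 2 with the standard coherence bound \eqref{eq:coherence} applied blockwise: when the columns of each $A_i$ are normalized, $\delta_\sigma(A_i) \leq (\sigma-1)\mu(A_i) \leq (\sigma-1)\nu(A)$, so $\sup_i \delta_\sigma(A_i) \leq (\sigma-1)\nu(A)$, and substituting into item 2 yields the claim. The only genuinely technical point is the phase-rotation and norm-control in the polarization argument for $\mu_{\mathrm{block}}^{\sigma\!\sigma}(A) \leq 2\delta_{2,\sigma}(A)$; the remaining steps are bookkeeping on the expansion above.
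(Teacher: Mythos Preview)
Your proposal is correct. The overall structure matches the paper: expand $\|Ax\|^2$ into diagonal and off-diagonal block terms, bound the diagonal via the $\sigma$-RIP of each $A_i$, and bound the cross terms via the block coherence. The first inequality of item~1 and the normalized-column corollary are handled identically in both.

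There are two modest differences worth noting. For the bound $\mu^{\sigma\sigma}_{\mathrm{block}}(A)\le 2\delta_{2,\sigma}(A)$, the paper uses the full polarization identity $\langle A_j x, A_k y\rangle = \tfrac14\sum_{\ell=0}^3 i^\ell \|A(e_j\otimes x + i^\ell e_k\otimes y)\|^2$ and bounds each summand with the $(2,\sigma)$-HiRIP; you instead expand only $\|A(e_i\otimes u + e_j\otimes v)\|^2$ after a phase rotation and combine the upper HiRIP bound with the lower bound on $\|A_iu\|^2,\|A_jv\|^2$. Both routes are standard and yield the same constant. For the cross-term sum in item~2, the paper introduces the Hermitian operator $Cy = \sum_i e_i\otimes P_{S_i}\sum_{k\neq i} A_i^*A_k y_k$ on the support subspace and bounds its largest eigenvalue by a Gershgorin-type argument (pick the block of an eigenvector with maximal norm); you bypass this by the direct inequality $\sum_{i\neq j}\|x_i\|\,\|x_j\| = (\sum_i\|x_i\|)^2 - \sum_i\|x_i\|^2 \le (s-1)\|x\|^2$. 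Your argument is shorter and more elementary; the paper's operator formulation is perhaps more suggestive of how the bound generalizes, but for the present statement there is no advantage to it.
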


\begin{proof}
     1. Let $j \neq k$ and $x,y\in \K^n$ be $\sigma$-sparse normalized vectors. First, we have
     \begin{equation*}
         \vert \Vert A_j x \Vert^2 - \Vert x \Vert^2 \vert = \vert \Vert A (e_j \otimes x) \Vert^2 - \Vert e_j \otimes x \Vert^2 \vert \leq \delta_{1,\sigma}(A),
    \end{equation*}
    since $e_j \otimes x$ is $(1, \sigma)$-sparse. This proves the first claim. 
    For the second claim, we use the polarization identity to find
    \begin{equation*}
        \langle A_j x, A_k y \rangle = \frac{1}{4}\sum_{\ell=0}^3 i^\ell \left\Vert A_j x + i^\ell A_k y \right\Vert^2  =  \frac{1}{4}\sum_{\ell=0}^3 i^\ell \left\Vert A(e_j \otimes  x+i^\ell e_k \otimes y) \right\Vert^2 .
    \end{equation*}
    Since $e_j \otimes x$ and $e_k \otimes y$ have disjoint block supports, $e_j \otimes x+i^\ell e_k \otimes y$ are $(2,\sigma)$-sparse for all $\ell$. Hence,
   \begin{align*}
      \left\vert\frac{1}{4}\sum_{\ell=0}^3 i^\ell \left\Vert A( e_j \otimes x +i^\ell e_k \otimes y )\right\Vert^2 -\frac{1}{4}\sum_{\ell=0}^3 i^\ell \left\Vert  e_j \otimes x +i^\ell e_k \otimes y\right\Vert^2   \right\vert \\
      \leq \delta_{2,\sigma} \cdot \frac{1}{4}\sum_{\ell=0}^3 \left\Vert  e_j \otimes x+i^\ell e_k \otimes  y \right\Vert^2.
   \end{align*}
   Now we use that $\left\Vert  e_j \otimes x+i^\ell e_k \otimes y \right\Vert^2= 2$ for all $ \ell$. This both proves that the final bound above equals $\tfrac{1}{2}\delta_{s,\sigma}$,  and that $\sum_{\ell=0}^3 i^\ell \left\Vert  e_j \otimes x +i^\ell e_k \otimes  y \right\Vert^2 =0$, yielding the claim.
   
     2.  Let $x=\sum_i e_i \otimes x_i$ be an $(s,\sigma)$-sparse and normalized signal. There exists an $S\subseteq [N]$ with $\vert{S}\vert=s$ so that $x_i=0$ for $i \notin S$. We have
     \begin{eqnarray*}
         \Vert A x \Vert^2  &=&  \sum_{i=1}^N  \Vert A_i x_i \Vert^2   + \sum_{i \neq j} \langle{A_i x_i, A_j x_j} \rangle.
     \end{eqnarray*}
     Each $x_i$ is $\sigma$-sparse and, thus, $\vert \Vert A_i x_i \Vert^2 - \Vert x_i \Vert^2 \vert\leq \delta_\sigma(A_i) \Vert x_i \Vert^2$. Taking the sum over $i$ yields
     \begin{align*}
         \left \vert \Vert x \Vert^2 - \sum_{i=1}^N  \Vert A_i x_i \Vert^2 \right\vert \leq \sup_{i} \delta_\sigma(A_i) \Vert{x}\Vert^2.
     \end{align*}
     We still need to deal with the cross-block terms.  
     Let the support of $x_i$ be denoted $S_i$, the orthogonal projection onto the space supported on $S_i$ with $P_{S_i}$, and define $V$ as the subspace with the same support as $x$. 
     Consider the operator $C: V \to V$, 
     \begin{equation*}
         y =\sum_i e_i \otimes y_i \mapsto  \sum_{i \in S } e_i \otimes P_{S_i}\left( \sum_{k \in S \setminus \{i\}
         } A_i^*A_k y_k\right).
     \end{equation*}
     We have 
     \begin{equation*}
         \sum_{i \neq j} \langle{A_i x_i, A_j x_j} \rangle  = \langle x, C x \rangle,
     \end{equation*}
     and $C$ is Hermitian. 
     The latter implies that  $\vert\langle{x,Cx}\rangle \vert \leq \lambda\Vert x \Vert^2$, where $\lambda$ is the magnitude of the largest eigenvalue of $C$. 
     To estimate $\lambda$, let $v = \sum_i e_i \otimes v_i$ be a normalized eigenvector for $C$, and $i$ such that $\Vert v_i \Vert$ is maximal.
     We have $\lambda v_i = P_{S_i} \sum_{k \in S} A_i^* A_k v_k$, and consequently
     \begin{eqnarray*}
         \lambda \Vert v_i \Vert^2   &=& \langle v_i , P_{S_i}\sum_{k \in S} A_i^* A_k v_k\rangle = \sum_{\substack{k\in S \\  k \neq i}} \langle{A_i v_i, A_k v_k}\rangle \\
         &\leq &\sum_{\substack{k\in S \\  k \neq i}} \mu_{\mathrm{block}}^{\sigma\!\sigma}(A) \Vert v_i \Vert \Vert v_k \Vert \leq (s-1) \mu_{\mathrm{block}}^{\sigma\!\sigma}(A) \Vert v_i \Vert^2.
     \end{eqnarray*}
     In the second step we have used that  $v_i =P_{S_i}v_i$,
     since $v \in V$. In the penultimate step we have used that $v_i$ and $v_k$ all are $\sigma$-sparse and that each index $k$ in the sum is different from $i$. In the final step we have used the optimality of $i$.  This proves that $\lambda \leq (s-1)\mu_{\mathrm{block}}^{\sigma\!\sigma}(A)$ and therefore the claim.
     
     Finally, the addition of the theorem follows from the claim with \eqref{eq:coherence}.
\end{proof}

The above result can be applied to construct a large family of operators that have suitably small HiRIP constants without exhibiting RIP in this regime. 
Consider $N$ pairwise orthogonal, $p$-dimensional subspaces of $\mathbb{K}^m$, and $E_i: \mathbb{K}^{p}\to \mathbb{K}^m$ isometric embeddings onto them. 
Let further $C \in \mathbb{K}^{p\times n}$ be a fixed matrix with $\delta_\sigma(C) = \delta<1$. 
We consider the operator 
\begin{equation*}
    A : \mathbb{K}^N \otimes \mathbb{K}^n \to \mathbb{K}^m, \ x \mapsto \sum_{i=1}^N E_i C x_i.
\end{equation*}
The block operators of $A$ are given by $E_iC$, $i \in [N]$ and each of them is compressively encoding $\K^n$ into one of the mutually orthogonal subspaces. 
Due  to the fact that the $E_i$ are isometric, $\delta_\sigma(A_i) = \delta_\sigma(C)$ for each $i$. 
The pairwise orthogonality of the subspaces imply that $A_i^*A_j=0$ for $i \neq j$, so that $\mu_{\mathrm{block}}^{\sigma\!\sigma}(A)=0$. 
Theorem~\ref{th:blockcoherence} then implies that $\delta_{s,\sigma}(A) \leq \delta(C)$ for any $s$. 

The above construction will generically not result in an operator with small $\delta_{s\sigma}(A)$. 
To this end, suppose that $s\sigma \leq n$ and $p \leq n-s\sigma$.
Then, there exists an $s\sigma$-sparse $w \in \mathbb{K}^n$ with $Cw =0$. 
Now the vector $x=(w, 0, \dots, 0)$ is $s\sigma$-sparse, but 
$
    \left\vert \Vert Ax \Vert^2 - \Vert x \Vert^2 \right\vert  = \left\vert \Vert 0 \Vert^2 - \Vert x \Vert^2 \right\vert =  \Vert x \Vert^2
$.
We conclude that $\delta_{s\sigma}(A)\geq 1$. %

A disadvantage of this construction is that necessarily $m \geq Np \geq N\sigma$. 
This is a considerably worse scaling than we found for Gaussian random matrices, which exhibit the HiRIP for $m \gtrsim s\sigma$ up to log-factors. 
The scaling in $N$ as opposed to the sparsity parameter on the block-level $s$ arises 
from the encoding into mutually orthogonal subspaces.
The idea of `mixing' block operators can, however, be driven a lot further to avoid this overhead as we will see in the next section.

\subsection{Hierarchical measurement operators} \label{sec:HiMesOp}
As we saw above, a measurement operator on $\mathbb{K}^N \otimes \mathbb{K}^n$ can always be thought of as a mixture of block operators, say
\begin{equation*}
    B(x) = \sum_{i=1}^N B_i x_i.
\end{equation*}
The inequalities in Theorem~\ref{th:blockcoherence}, part 1 imply that in order for $B$ to have a small HiRIP constant, we need each block operator to be well-conditioned, and in addition that the blocks are incoherent. What can we do when they are not? 

Assume that instead of just observing $Bx$, we are allowed to sample a few different linear combinations of the vectors $B_ix_i$,
\begin{align*}
    y =  \left(\sum_{i=1}^N a_{i,j}B_ix_i\right)_{j \in [M]} = \sum_{i =1}^N a_i \otimes B_ix_i,
\end{align*}
with $a_i = (a_{j,i})_{j \in [M]} \in \K^M$.
Can this make recovery easier? 
Let us define such measurement operators that act hierarchically on the block structure of the vectors as \emph{hierarchical measurement operators}. 
\begin{definition}[Hierarchical measurement operators]
    Let $A\in \mathbb{K}^{M,N}$ and $B_i \in \mathbb{K}^{m,n}$, $i=1, \dots, N$, be given and denote the $i$th column of $A$ by $a_i$. We call the operator
    \begin{equation*}
        \mathcal{H}: \mathbb{K}^N \otimes \mathbb{K}^n \to \mathbb{K}^M \otimes \mathbb{K}^m, \qquad x \mapsto \sum_{i =1}^N a_i \otimes B_i x_i
    \end{equation*}
    the \emph{hierarchical measurement operator defined by $A$ and $(B_i)_{i \in [N]}$}.
\end{definition}

The structure and naming of  hierarchical operators makes it easy to believe that they are an excellent fit for hierarchically sparse recovery. 
They are, however, by no means only of academic interest. We will discuss this more thoroughly in Sec.~\ref{section:applications}. 
For now, the practical interest might already become apparent by noting that 
an important special case of hierarchical measurement operators is the following:
In the case of $B_i = B $ being equal, the hierarchical operator is the same as the \emph{Kronecker product} $A\otimes B$ of the matrices $A$ and $B$. 
How do the hierarchical isometry constants of $\mathcal{H}$ relate to the ones of $A$ and the $B_i$s? In order to discuss this question, we begin by proving the following lemma.

\begin{lemma}
[RIP implies nuclear norm isometry]
    Let 
    $X \in \mathbb{K}^{N \times N}$ have the property that for some sets $S, \overline{S}$ of cardinality $s$, $X_{i,j}=0$ if either $i \notin S$ or $j \notin \overline{S}$.\label{lem:schatten}
    \begin{enumerate}
        \item If $X$ is positive definite Hermitian, which in particular implies $S=\overline{S}$,
        \begin{equation*}
            \left\vert \langle A^*A, X \rangle  - \Vert X \Vert_* \right\vert \leq \delta_s(A) \Vert X \Vert_*\, .
        \end{equation*}
        \item If $S$ and $\overline{S}$ are disjoint,
        \begin{equation*}
            \vert \langle A^*A,X \rangle \vert \leq \delta_{2s}(A) \Vert X \Vert_* \, .
        \end{equation*}
 
    \end{enumerate}
    Here,
        $\Vert X \Vert_*$ denotes the nuclear norm, also known as the trace norm, of $X$, i.e. the sum of its singular values.
\end{lemma}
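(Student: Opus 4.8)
The plan is to reduce both statements to a single estimate on the Hilbert--Schmidt pairing $\langle A^*A - I, X\rangle = \trace\bigl((A^*A-I)X\bigr)$ and then to control that quantity by decomposing $X$ spectrally, feeding each rank-one piece into the RIP. The two normalizations on the left-hand sides are absorbed by the trace of the identity. In part 1, positive-definiteness gives $\Vert X \Vert_* = \trace(X) = \langle I, X\rangle$, so that $\langle A^*A, X\rangle - \Vert X \Vert_* = \langle A^*A - I, X\rangle$; in part 2, the disjointness of $S$ and $\overline{S}$ forces every diagonal entry of $X$ to vanish, whence $\trace(X) = \langle I, X\rangle = 0$ and again $\langle A^*A, X\rangle = \langle A^*A-I, X\rangle$ (the conjugation convention in the pairing is immaterial here, since $X$ is Hermitian in part 1 and traceless in part 2). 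Writing $M := A^*A - I$, I recall that the definition of $\delta_t(A)$ is precisely $\vert u^* M u\vert \leq \delta_t(A)\Vert u\Vert^2$ for every $t$-sparse $u$.

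For part 1 I would diagonalize the positive-definite Hermitian matrix as $X = \sum_k \lambda_k u_k u_k^*$ with $\lambda_k > 0$ and orthonormal $u_k$. Because $X = P_S X P_S$ confines the range of $X$ to the coordinate subspace indexed by $S$ with $\vert S\vert = s$, each eigenvector $u_k$ is $s$-sparse. Then $\langle M, X\rangle = \sum_k \lambda_k\, u_k^* M u_k$, and applying the RIP to each unit $s$-sparse vector yields $\vert u_k^* M u_k\vert \leq \delta_s(A)$. Summing, $\vert\langle M, X\rangle\vert \leq \delta_s(A)\sum_k \lambda_k = \delta_s(A)\trace(X) = \delta_s(A)\Vert X\Vert_*$, as claimed.

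For part 2 I would instead use the singular value decomposition $X = \sum_k \sigma_k u_k v_k^*$. Since $X = P_S X P_{\overline{S}}$, the column space is confined to $S$ and the row space to $\overline{S}$, so each left singular vector $u_k$ is supported on $S$, each right singular vector $v_k$ on $\overline{S}$; in particular both are $s$-sparse and, crucially, $u_k$ and $v_k$ have disjoint supports. Then $\langle M, X\rangle = \sum_k \sigma_k\, v_k^* M u_k$, and it remains to bound the off-diagonal form $\vert v_k^* M u_k\vert$ by $\delta_{2s}(A)$. I would rotate the phase of $v_k$ so that $v_k^* M u_k$ is real and nonnegative, reducing the task to bounding $\Re(v_k^* M u_k)$, and then invoke the parallelogram-type identity $\Re(v_k^* M u_k) = \tfrac14\bigl[(u_k + v_k)^* M (u_k+v_k) - (u_k - v_k)^* M (u_k - v_k)\bigr]$. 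Because $u_k \pm v_k$ is $2s$-sparse with $\Vert u_k \pm v_k\Vert^2 = 2$ (orthogonal by disjoint supports), each bracketed term is at most $2\delta_{2s}(A)$ in modulus, giving $\vert v_k^* M u_k\vert \leq \delta_{2s}(A)$. Summing against $\sigma_k$ then produces $\vert\langle M, X\rangle\vert \leq \delta_{2s}(A)\sum_k \sigma_k = \delta_{2s}(A)\Vert X\Vert_*$.

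The delicate point I expect is in part 2, namely extracting the sharp constant $\delta_{2s}(A)$ rather than a lossy $2\delta_{2s}(A)$. A four-term complex polarization of $v_k^* M u_k$ (as used in the proof of Theorem~\ref{th:blockcoherence}) spreads the mass over four $2s$-sparse quadratic forms and only delivers the factor $2$; the phase-rotation-to-real-part device is what restores the tight bound, and I would note that the rotated $v_k$ keeps its support in $\overline{S}$, since rescaling by a unimodular scalar preserves the support. The only other step needing care is the justification that the eigen- and singular vectors inherit the support pattern of $X$, which is not an added hypothesis but a direct consequence of $X = P_S X P_{\overline{S}}$.
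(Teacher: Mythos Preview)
Your proof is correct and follows essentially the same route as the paper's: both arguments spectrally decompose $X$ (eigendecomposition in part~1, SVD in part~2), observe that the eigen\slash singular vectors inherit the support constraints from $X=P_SXP_{\overline S}$, and then feed each rank-one piece into the RIP. Your framing via $M=A^*A-I$ is a cosmetic reorganization of the paper's direct computation with $\langle Au_k,Av_k\rangle$; the only substantive difference is that for the off-diagonal bound $|v_k^*Mu_k|\le\delta_{2s}(A)$ in part~2 you supply an inline proof (phase rotation to real part plus the two-term parallelogram identity), whereas the paper simply cites this as \cite[Prop.~6.3]{FouRau13}.
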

\begin{proof}
     Consider a singular value decomposition of $X$, $X= \sum_{i=1}^N \sigma_i v_i  u_i^*$.
     We have
     $%
         \langle A^* A , X \rangle = \sum_{i=1}^N \sigma_i \langle{Au_i, A v_i}\rangle%
    $.
     Due to the assumption, for all $i$ with $\sigma_i \neq 0$,  $\operatorname{supp}(v_i) \subset S$ and $\operatorname{supp}(v_i) \subset \overline{S}$. 

1. For $X$ positive-definite, the $\sigma_i$ are the eigenvalues of $X$, and $u_i=v_i$. Since each $u_i$ is $s$-sparse, it holds that
        \begin{equation*}
            \left\vert \langle A^*A, X \rangle  - \Vert X \Vert_* \right\vert  \leq \sum_{i=1}^N \sigma_i \left\vert \langle{Au_i, A u_i}\rangle -1 \right\vert \leq  \sum_{i=1}^N \sigma_i \cdot \delta_s(A) = \delta_s(A) \Vert X \Vert_*.
        \end{equation*}
        
    2. Ref.~\citeWE[Prop. 6.3]{FouRau13} states that since the supports of $u_i$ and $v_i$ are disjoint, we have $\vert \langle A u_i, Av_i \rangle \vert \leq \delta_{2s}(A)$. This in turn implies
        \begin{equation*}
            \vert \langle A^*A, X \rangle \vert  \leq \sum_{i=1}^N \sigma_i \vert \langle{Au_i, A v_i}\rangle \vert \leq  \sum_{i=1}^N \sigma_i \delta_{2s}(A) = \delta_{2s}(A) \Vert X \Vert_* \,. 
        \end{equation*}
        
        { \flushright $\qed$}
\end{proof}

We now prove that $\mathcal H$ inherits the HiRIP from the RIP of its constituent matrices, in that $\delta_{s,\sigma}(\mathcal{H})$ can be bounded in terms of  $\delta_s(A)$ and the constants $\delta_\sigma(B_i)$.

\begin{theorem}[{Hierarchically inherited HiRIP}] 
\label{th:hihiRIP}
    Let $\mathcal{H}$ be the hierarchical operator defined by $A$ and $(B_i)_{i\in [N]}$. We have for $s,\sigma$ arbitrary
    \begin{equation*}
        \delta_{s,\sigma}(\mathcal{H}) \leq \delta_{s}(A) + \sup_{i}\delta_{\sigma}(B_i) + \delta_s(A) \cdot \sup_{i} \delta_\sigma(B_i).
    \end{equation*}
\end{theorem}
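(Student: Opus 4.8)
The plan is to reduce the statement to two successive restricted-isometry estimates—one ``horizontal'' estimate carried out through $A$ and one ``vertical'' estimate through the $B_i$—and to glue them together using Lemma~\ref{lem:schatten}. First I would fix a normalized $(s,\sigma)$-sparse vector $x = \sum_{i \in S} e_i \otimes x_i$, where $|S| \le s$ and each $x_i$ is $\sigma$-sparse, and expand
\[
    \|\mathcal{H}x\|^2 = \Big\langle \sum_i a_i \otimes B_i x_i,\ \sum_j a_j \otimes B_j x_j \Big\rangle = \sum_{i,j} \langle a_i, a_j\rangle\, \langle B_i x_i, B_j x_j\rangle,
\]
using that the inner product on $\mathbb{K}^M \otimes \mathbb{K}^m$ factorizes on elementary tensors. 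The crucial observation is that, introducing the matrix $X \in \mathbb{K}^{N\times N}$ with entries $X_{i,j} = \langle B_j x_j, B_i x_i\rangle$, this double sum is exactly the Hilbert--Schmidt pairing $\langle A^*A, X\rangle$. Here $X$ is (the conjugate of) the Gram matrix of the vectors $(B_i x_i)_i$, hence positive semidefinite Hermitian, and it is supported on $S\times S$ since $x_i = 0$ for $i \notin S$.

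Thus $X$ meets the hypotheses of Lemma~\ref{lem:schatten}, part~1, which yields
\[
    \big|\, \|\mathcal{H}x\|^2 - \|X\|_* \,\big| \le \delta_s(A)\, \|X\|_*.
\]
Because $X$ is positive semidefinite, $\|X\|_* = \trace X = \sum_{i\in S}\|B_i x_i\|^2$. Next I would apply the RIP of each block operator: every $x_i$ is $\sigma$-sparse, so $\big|\|B_i x_i\|^2 - \|x_i\|^2\big| \le \delta_\sigma(B_i)\|x_i\|^2$, and summing over $i\in S$ (using $\|x\|=1$) gives $\big|\trace X - 1\big| \le \sup_i \delta_\sigma(B_i)$. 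Writing $\delta_A = \delta_s(A)$ and $\delta_B = \sup_i \delta_\sigma(B_i)$, the two displays become $1-\delta_B \le \|X\|_* \le 1+\delta_B$ and $(1-\delta_A)\|X\|_* \le \|\mathcal{H}x\|^2 \le (1+\delta_A)\|X\|_*$. Multiplying them gives $(1-\delta_A)(1-\delta_B) \le \|\mathcal{H}x\|^2 \le (1+\delta_A)(1+\delta_B)$, and since $(1+\delta_A)(1+\delta_B) = 1+\delta_A+\delta_B+\delta_A\delta_B$ while $(1-\delta_A)(1-\delta_B) \ge 1-(\delta_A+\delta_B+\delta_A\delta_B)$, both endpoints lie within $1 \pm (\delta_A + \delta_B + \delta_A\delta_B)$. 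This is precisely the asserted bound on $\delta_{s,\sigma}(\mathcal{H})$.

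The only genuinely delicate point is the bookkeeping in the middle step: one must verify that $X$ carries exactly the structure demanded by Lemma~\ref{lem:schatten} (Hermitian, positive semidefinite, block support inside $S \times S$ with $S=\overline{S}$) and that the double sum matches $\langle A^*A, X\rangle$ with the correct conjugation/transpose convention—this is why I define $X_{i,j} = \langle B_j x_j, B_i x_i\rangle$ rather than the reverse. Everything after that is the routine chaining of two relative-error estimates into their product, which is exactly what produces the cross term $\delta_s(A)\cdot \sup_i \delta_\sigma(B_i)$.
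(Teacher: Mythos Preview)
Your proof is correct and essentially identical to the paper's: both expand $\|\mathcal{H}x\|^2$ as $\langle A^*A, G\rangle$ for the Gram matrix $G$ of the $B_ix_i$, invoke Lemma~\ref{lem:schatten}, part~1, to compare with $\|G\|_* = \sum_i \|B_ix_i\|^2$, and then apply the $\sigma$-RIP of each $B_i$ to relate this to $\|x\|^2$. The only cosmetic difference is that the paper chains the two estimates via the triangle inequality $|\langle A^*A,G\rangle - 1| \le |\langle A^*A,G\rangle - \|G\|_*| + |\|G\|_* - 1|$, whereas you phrase it multiplicatively as $(1\pm\delta_A)(1\pm\delta_B)$; both yield the same cross term.
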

\begin{proof}
     Let $x$ be normalized and $(s,\sigma)$-sparse, and $S$ such that $a_i =0$ for $i \notin S$. We have
     \begin{equation*}
         \Vert \mathcal{H}(x) \Vert = \sum_{i,j=1}^N \langle{a_i \otimes (B_ix_i), a_j \otimes(B_j x_j) }\rangle = \sum_{i,j=1}^N \langle a_i, a_j \rangle \langle B_i x_i, B_j x_j\rangle = \langle A^*A, G \rangle, \label{eq:HSrepresentation}
     \end{equation*}
     where $G \in \K^{N\times N}$ denotes the matrix with non-vanishing entries $G_{i,j} = \langle B_i x_i, B_j x_j \rangle$ for $i \in S$ and $j \in S$.
     By Lemma~\ref{lem:schatten}, part 1,  
     \begin{equation} \label{eq:estimateone}
         \vert\langle{A^*A,G}\rangle - \Vert G \Vert_* \vert \leq \delta_s(A) \Vert G \Vert_*.
     \end{equation}
     It remains to estimate $\Vert G \Vert_*$. In order to do this, consider the operator
     $%
         M : \mathbb{K}^{{|}S{|}} \to \mathbb{K}^{m}$%
         , $c \mapsto \sum_{i\in S} c_i B_i x_i$%
     .
     By construction, $G=M^*M$, and therefore, $\Vert{G}\Vert_* = \Vert M \Vert^2= \sum_{i\in S} \Vert B_i x_i \Vert^2$, where $\Vert \, \cdot \, \Vert$ here refers to the Frobenius norm. Consequently,
     \begin{equation} \label{eq:estimatetwo}
         \left\vert \Vert{G}\Vert_* - \Vert x \Vert^2 \right\vert \leq \sum_{i\in S} \left\vert \Vert B_i x_i \Vert^2 - \Vert x_i \Vert^2 \right\vert \leq \sum_{i\in S} \delta_{\sigma}(B_i) \Vert x_i \Vert^2.
     \end{equation}
     Combining \eqref{eq:estimateone} and \eqref{eq:estimatetwo}, we obtain
     \begin{eqnarray*}
         \left\vert\langle{A^*A,G}\rangle - \Vert x \Vert^2 \right\vert &\leq& \left\vert\langle{A^*A,G}\rangle - \Vert G \Vert_* \right\vert + %
         \left\vert \Vert{G}\Vert_* - \Vert x \Vert^2 \right\vert 
         \\
         &\leq& \delta_s(A) \left( 1+\sup_{i}\delta_{\sigma}(B_i) \right) \Vert x\Vert^2+ \sup_i \delta_{\sigma}(B_i)\Vert x \Vert^2,
     \end{eqnarray*}
     which proves the claim.
\end{proof}

The theorem shows that hierarchical operators are a rich class of operators which much more often have the HiRIP than the RIP.  
To make this precise, we take a look at the special case of Kronecker products $A\otimes B$.
Theorem~\ref{th:hihiRIP} implies that if $\delta_s(A)$ and $\delta_{\sigma}(B)$ are small, $\delta_{s,\sigma}(A \otimes B)$ is also small. This is in stark contrast to the RIP of Kronecker products. 
Indeed, Ref.\  \citeWE{jokar:2009sparse} derived that
\begin{align*}
        \delta_{s}(A \otimes B) \geq \max(\delta_s(A), \delta_s(B)).
\end{align*}
That is, in order for $A\otimes B$ to have the $s$-RIP (nota bene, not the $s\sigma$-RIP), \emph{both} $A$ and $B$ must have it. 
This obstacle leads to demanding performance bounds in applications \citeWE{ShabaraKoksalEkici:2021}. 

The total number of measurements measured by a hierarchical operator is equal to $mM$. 
Together with the classical results on the RIP of  Gaussian operators, the theorem implies that by choosing $A$ and $B$ Gaussian we can hence build hierarchical operators having the $(s,\sigma)$-HiRIP using only
    \begin{align*}
        \mathrm{const} \cdot s\sigma \log \left(\frac{n}{\sigma}\right) \log\left(\frac{N}{s}\right)
    \end{align*}
    many measurements. 
    This scaling is up to log-factors identical to the result Eq.\ \eqref{eq:nbrmeasurments} we established for fully Gaussian matrices. 
    This is noteworthy, since while fully Gaussian matrix consists of $MN \cdot mn$ independent parameters, a Kronecker product $A\otimes B$ only has $MN + mn$. 
    This constitutes a considerable de-randomization of the measurements, which can be e.g.\ exploited to reduce the storage complexity or to speed up calculations. 
    We refer to Refs.~\citeWE{roth2018hierarchical, RothEtAl:2020:HiHTP} for an extended discussion and an alternative direct proof of HiRIP for Kronecker product measurements.

Theorem~\ref{th:hihiRIP} tells us that operators with small RIP constants can be combined to obtain an operator with a small HiRIP constant. 
We now take a look at the contrary question: To what extend 
are small RIP constants of the constituent operators required to 
bound the HiRIP constants of the hierarchical measurement operator?
 
In order to get a simple formulation of our first result, let us first note that there is an ambiguity in the definition of a hierarchical measurement operator. 
We can always simultaneously rescale $a_i$ and $B_i$ since $a_i \otimes B_i = (\lambda a_i) \otimes (\lambda^{-1}B_i)$. 
We may thus w.l.o.g.\ assume that $\Vert a_i \Vert =1$ for all $i$. 
Under this assumption, a small $(s,\sigma)$-HiRIP constant of  $\mathcal{H}$ indeed implies small $\sigma$-RIP constants of all $B_i$.

\begin{proposition}
[$\sigma$-RIP bound from $(s,\sigma)$-HiRIP]
Let $\mathcal{H}$ be a hierarchical measurement operator given by $A$ and $(B_i)_{i\in [N]}$. Assume that the columns of $A$ fulfil $\Vert a_i \Vert =1$ for all $i$. 
Then, it holds that
    \begin{equation*}
        \sup_{i} \delta_{\sigma}(B_i) \leq \delta_{1,\sigma}(\mathcal{H}).
    \end{equation*}
\end{proposition}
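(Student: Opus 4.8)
The plan is to test the HiRIP of $\mathcal{H}$ on the simplest possible $(1,\sigma)$-sparse vectors, namely those supported on a single block. Fix a block index $i \in [N]$ and let $v \in \K^n$ be an arbitrary $\sigma$-sparse vector. I would form the probe vector $x = e_i \otimes v \in \K^N \otimes \K^n$. Since only the $i$th block of $x$ is non-zero and that block is $\sigma$-sparse, $x$ is $(1,\sigma)$-sparse, so the defining inequality of $\delta_{1,\sigma}(\mathcal{H})$ applies to it.

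Next I would evaluate both sides of the HiRIP inequality on this $x$. By the definition of $\mathcal{H}$, only the $i$th summand of $\sum_j a_j \otimes B_j x_j$ survives, so $\mathcal{H}(x) = a_i \otimes B_i v$. Using multiplicativity of the Euclidean norm under the Kronecker product gives $\Vert \mathcal{H}(x) \Vert^2 = \Vert a_i \Vert^2 \Vert B_i v \Vert^2$ and, analogously, $\Vert x \Vert^2 = \Vert e_i \Vert^2 \Vert v \Vert^2 = \Vert v \Vert^2$. Here is precisely where the hypothesis enters: the normalization $\Vert a_i \Vert = 1$ collapses the first expression to exactly $\Vert B_i v \Vert^2$.

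Substituting these two evaluations into $\vert \Vert \mathcal{H}(x) \Vert^2 - \Vert x \Vert^2 \vert \leq \delta_{1,\sigma}(\mathcal{H}) \Vert x \Vert^2$ then yields $\vert \Vert B_i v \Vert^2 - \Vert v \Vert^2 \vert \leq \delta_{1,\sigma}(\mathcal{H}) \Vert v \Vert^2$ for every $\sigma$-sparse $v \in \K^n$. Since $\delta_\sigma(B_i)$ is by definition the \emph{smallest} constant for which such a two-sided bound holds over all $\sigma$-sparse vectors, this forces $\delta_\sigma(B_i) \leq \delta_{1,\sigma}(\mathcal{H})$. As the block index $i$ was arbitrary, taking the supremum over $i$ completes the argument.

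There is no genuine technical obstacle here; the entire content is the observation that a single-block probe $e_i \otimes v$ isolates one block operator $B_i$ and that the tensor-product norm factorizes across the two tensor legs. The only point requiring care is the normalization hypothesis $\Vert a_i \Vert = 1$: without it one would merely recover a statement of the form $\Vert a_i \Vert^2 \delta_\sigma(B_i) \leq \delta_{1,\sigma}(\mathcal{H}) + \dots$, and the clean inequality would fail. The rescaling ambiguity $a_i \otimes B_i = (\lambda a_i) \otimes (\lambda^{-1} B_i)$ remarked upon just before the proposition shows that this normalization is exactly the natural gauge fixing that makes the stated bound meaningful.
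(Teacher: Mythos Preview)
Your proof is correct and is essentially the same as the paper's. The paper phrases it slightly differently by first identifying the block-operators of $\mathcal{H}$ as $\mathcal{H}_i = a_i \otimes B_i$, observing that the normalization gives $\delta_\sigma(\mathcal{H}_i) = \delta_\sigma(B_i)$, and then invoking the earlier result $\sup_i \delta_\sigma(\mathcal{H}_i) \leq \delta_{1,\sigma}(\mathcal{H})$; your argument simply inlines that earlier result using the same probe vectors $e_i \otimes v$.
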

\begin{proof}
    The $i$th block-operator $\mathcal{H}_i$ of $\mathcal{H}$ is given by $a_i \otimes B_i \in \mathbb{K}^{Mm \times N}$. The normalization implies that
    $
        \Vert (a_i \otimes B_i)x \Vert^2 = \Vert a_i \Vert^2 \cdot \Vert B_i x \Vert^2 = \Vert B_i x \Vert^2
    $
    for each $x \in \mathbb{K}^N$. Thus,  $\delta_{\sigma}(B_i) = \delta_{\sigma}(\mathcal{H}_i)$, and the result follows from Theorem~\ref{th:blockcoherence}, part 1.
\end{proof}

The above result in essence states that for $\mathcal{H}$ to have the $(s,\sigma)$-HiRIP, it is necessary that all $B_i$ have the corresponding $\sigma$-RIP. 
Intriguingly, for the RIP requirement of $A$, the situation is very different. 
Indeed, if the $B_i$ are mapping into incoherent subspaces, $A$ does not need to have the RIP. The precise result is as follows.

\label{th:incoherent_blocks}
\begin{theorem}[HiRIP with block incoherence] 
    For a family
    $(B_i)_{i\in [N]}$, define the operator 
    \begin{equation*}
        \mathcal{B} : \mathbb{K}^N \otimes \mathbb{K}^n \to \mathbb{K}^m, \quad x \mapsto \sum_{i=1}^N B_i x_i\, .
    \end{equation*}
    Let $A \in \mathbb{K}^{M\times N}$ and natural numbers $s$, $\sigma$ and $t$ be given. 
    The hierarchical operator $\mathcal{H}$ given by $A$ and $(B_i)_{i \in [N]}$ fulfils
    \begin{equation*}
        \delta_{ts,\sigma}(\mathcal{H}) \leq \sup_{i} \delta_{\sigma}(B_i) + \delta_s(A) \cdot \sup_{i} \delta_{\sigma}(B_i) + t\sqrt{s} \cdot \delta_{2s}(A) \cdot \mu_{\mathrm{block}}^{(2\sigma,2\sigma)}(\mathcal{B}) \, .
    \end{equation*}
\end{theorem}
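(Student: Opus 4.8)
The plan is to reduce the $(ts,\sigma)$-HiRIP estimate to a bound on the matrix $A^*A$ paired with a Gram matrix, exactly as in the proof of Theorem~\ref{th:hihiRIP}, and then to exploit that $A$ only needs to be well-conditioned on supports of size $\le 2s$ once the blocks are incoherent. Concretely, I would fix a normalized $(ts,\sigma)$-sparse $x=\sum_i e_i\otimes x_i$ with block support $S$, $|S|\le ts$, use the normalization $\Vert a_i\Vert=1$ (in force from the preceding paragraph), and write $\Vert\mathcal{H}x\Vert^2=\langle A^*A,G\rangle$ with $G_{ij}=\langle B_ix_i,B_jx_j\rangle$ supported on $S\times S$. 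As in Theorem~\ref{th:hihiRIP}, $G=M^*M$ is positive semidefinite with $\operatorname{tr}G=\sum_{i\in S}\Vert B_ix_i\Vert^2$.

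The first step is to split $G$ into its diagonal $D$ and off-diagonal part $\tilde G=G-D$. The diagonal is handled exactly: $\langle A^*A,D\rangle=\operatorname{tr}G$, and since each $x_i$ is $\sigma$-sparse, $\vert\operatorname{tr}G-\Vert x\Vert^2\vert\le\sup_i\delta_\sigma(B_i)\Vert x\Vert^2$, which delivers the $\sup_i\delta_\sigma(B_i)$ contribution (the multiplicative $\delta_s(A)$ refinement multiplying it in the statement is produced by the block-diagonal estimate carried out through Lemma~\ref{lem:schatten}(1), just as in Theorem~\ref{th:hihiRIP}).

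The crux is the off-diagonal term $\langle A^*A,\tilde G\rangle$, and the obstacle I expect to be central is that $\tilde G$ is supported on a set of size up to $ts$, so it cannot be controlled by $\delta_{ts}(A)$, which is unavailable. To circumvent this I would partition $S=\bigsqcup_{k=1}^t S_k$ with $|S_k|\le s$ and decompose $\tilde G=\sum_k\tilde G^{(k,k)}+\sum_{k\ne l}G^{(k,l)}$ into hollow within-piece parts and cross-piece blocks. Each $G^{(k,l)}$ with $k\ne l$ has disjoint row and column supports of size $\le s$, so Lemma~\ref{lem:schatten}(2) gives $\vert\langle A^*A,G^{(k,l)}\rangle\vert\le\delta_{2s}(A)\Vert G^{(k,l)}\Vert_*$; each $\tilde G^{(k,k)}$ is Hermitian with vanishing trace on a support of size $\le s$, so the Hermitian version of the estimate in Lemma~\ref{lem:schatten}(1) gives $\vert\langle A^*A,\tilde G^{(k,k)}\rangle\vert\le\delta_s(A)\Vert\tilde G^{(k,k)}\Vert_*\le\delta_{2s}(A)\Vert\tilde G^{(k,k)}\Vert_*$.

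It then remains to bound the nuclear norms. Since every block has rank $\le s$, one has $\Vert G^{(k,l)}\Vert_*\le\sqrt{s}\,\Vert G^{(k,l)}\Vert_F$, and bounding each entry via $\vert\langle B_ix_i,B_jx_j\rangle\vert\le\mu_{\mathrm{block}}^{(2\sigma,2\sigma)}(\mathcal{B})\Vert x_i\Vert\Vert x_j\Vert$ (the blocks $x_i$ are $\sigma$-sparse, hence a fortiori $2\sigma$-sparse) yields $\Vert G^{(k,l)}\Vert_*\le\sqrt{s}\,\mu_{\mathrm{block}}^{(2\sigma,2\sigma)}(\mathcal{B})\sqrt{\alpha_k\alpha_l}$ with $\alpha_k:=\sum_{i\in S_k}\Vert x_i\Vert^2$, and likewise $\Vert\tilde G^{(k,k)}\Vert_*\le\sqrt{s}\,\mu_{\mathrm{block}}^{(2\sigma,2\sigma)}(\mathcal{B})\,\alpha_k$. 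Summing the within- and cross-piece contributions together recombines into $\sum_{k,l}\sqrt{\alpha_k\alpha_l}=\big(\sum_k\sqrt{\alpha_k}\big)^2\le t\sum_k\alpha_k=t\Vert x\Vert^2$ by Cauchy--Schwarz, so the whole off-diagonal term is at most $t\sqrt{s}\,\delta_{2s}(A)\,\mu_{\mathrm{block}}^{(2\sigma,2\sigma)}(\mathcal{B})$; combining with the diagonal estimate gives the claim. I expect the delicate point to be exactly this bookkeeping: the partition is what makes Lemma~\ref{lem:schatten}(2) applicable with $\delta_{2s}$ in place of $\delta_{ts}$, and the careful pairing of the $\sqrt{s}$ (nuclear-versus-Frobenius) factor with the Cauchy--Schwarz factor $t$ is what produces the stated $t\sqrt{s}$ scaling rather than a cruder $t^2$ or $s\,t$.
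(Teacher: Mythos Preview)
Your approach is correct and essentially coincides with the paper's: both arguments express $\Vert\mathcal{H}x\Vert^2=\langle A^*A,G\rangle$, partition the block support into $t$ pieces of size $s$, invoke Lemma~\ref{lem:schatten} on the resulting blocks, pass from nuclear to Frobenius norm via the $\sqrt{s}$ rank factor, bound the entries by $\mu_{\mathrm{block}}^{(2\sigma,2\sigma)}(\mathcal{B})$, and finish with the Cauchy--Schwarz estimate $\big(\sum_k\sqrt{\alpha_k}\big)^2\le t\sum_k\alpha_k$. The only substantive difference is in how the diagonal is handled. The paper keeps the diagonal inside each $G^{k,k}$, which is then genuinely positive semidefinite, so Lemma~\ref{lem:schatten}(1) applies directly and no column normalization of $A$ is needed; the estimates $\vert\langle A^*A,G^{k,k}\rangle-\Vert G^{k,k}\Vert_*\vert\le\delta_s(A)\Vert G^{k,k}\Vert_*$ together with $\vert\Vert G^{k,k}\Vert_*-\sum_{i\in S_k}\Vert x_i\Vert^2\vert\le\sup_i\delta_\sigma(B_i)\sum_{i\in S_k}\Vert x_i\Vert^2$ are what produce the combination $\sup_i\delta_\sigma(B_i)+\delta_s(A)\sup_i\delta_\sigma(B_i)$. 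Your route instead strips off the diagonal first via $\Vert a_i\Vert=1$ (legitimate, since the paper notes this is w.l.o.g.) and then needs the Hermitian---not just positive---version of Lemma~\ref{lem:schatten}(1) for the hollow within-piece pieces $\tilde G^{(k,k)}$; that extension is easy but you should state it. Note, incidentally, that in your own framework the term $\delta_s(A)\sup_i\delta_\sigma(B_i)$ never actually appears: once the diagonal is extracted exactly, all remaining pieces are absorbed into the $t\sqrt{s}\,\delta_{2s}(A)\,\mu_{\mathrm{block}}^{(2\sigma,2\sigma)}(\mathcal{B})$ contribution, so your bound is in fact a hair sharper than the stated one and your parenthetical about where that cross term comes from is unnecessary.
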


\begin{proof}
    Let $x  = \sum_i e_i \otimes x_i$ be a $(ts,\sigma)$-sparse, normalized vector, and $S  \subset [N]$ be such that $x_i=0$ for $i  \notin S$. We may subdivide $S$ into $t$ disjoint sets $S_1, \ldots, S_t$ with cardinality $s$ each. For each pair $(k,\ell) \in [t] \times [t]$, we define a matrix $G^{k,\ell}\in \mathbb{K}^{N \times N}$ with non-vanishing entries 
    $%
        G^{k,\ell}_{i,j} = \langle{B_i x_i, B_j x_j}\rangle %
    $ 
    for $i\in S_k$ and $j\in S_\ell$. 
    We may use the same reasoning as in the proof of Theorem~\ref{th:hihiRIP} to argue that
    \begin{equation*}
        \Vert \mathcal{H}(x)\Vert^2 = \sum_{k=1}^ \langle{A^*A,G^{k,k}}\rangle + \sum_{k \neq \ell} \langle A^*A, G^{k,\ell}\rangle \, .
    \end{equation*}
    Now, each matrix $G^{k,\ell}$ fulfills the assumption of Lemma~\ref{lem:schatten}, part 1 for $k =\ell$, and Lemma~\ref{lem:schatten}, part 2 for $k \neq \ell$. 
    Hence,
    \begin{equation*}
        \left\vert \Vert \mathcal{H}(x)\Vert^2  - \sum_{k=1}^t \left\Vert G^{k,k}\right\Vert_* \right\vert \leq \delta_s(A) \cdot  \sum_{k=1}^N \left\Vert G^{k,k}\right\Vert_* + \delta_{2s}(A) \cdot \sum_{k\neq \ell} \left\Vert G^{k,\ell}\right\Vert_*
    \end{equation*}
    Still in analogy to the proof of Theorem~\ref{th:hihiRIP}, we find that $\left\vert\, \left\Vert{G^{k,k}}\right\Vert_* - \Vert{x_k}\Vert^2 \,\right\vert \leq \sup_{i} \delta_{\sigma}(B_i) \Vert{x_k}\Vert^2$, and consequently
    \begin{align*}
        \left\vert \mathcal{H}(x) - \Vert x \Vert^2 \right\vert \leq \delta_s(A)\left(1+\sup_{i}\delta_\sigma(B_i)\right)  + \delta_{2s}(A) \cdot \sum_{k\neq \ell} \left\Vert G^{k,\ell}\right\Vert_*\,.
    \end{align*}
    It remains to bound the terms with $k \neq l$. First, let us note that, since $G^{k,\ell}$ has rank at most $s$, $\Vert G^{k,\ell}\Vert_* \leq \sqrt{s} \Vert G^{k,\ell} \Vert$. We now use the definition of the intra-block coherence to argue that
    \begin{equation*}
        \left\Vert G^{k,\ell} \right\Vert = \sqrt{\sum_{i \in S_k, j \in S_\ell} \vert \langle B_i x_i, B_j x_j \rangle \vert^2} \leq \mu_{\mathrm{block}}^{(2\sigma,2\sigma)} \sqrt{\sum_{i \in S_k, j \in S_\ell}  \Vert x_i \Vert^2 \cdot \Vert x_j  \Vert^2}\,.
    \end{equation*}
Finally with
\begin{eqnarray*}
    \sum_{k\neq \ell} \sqrt{\sum_{i \in S_k,}  \Vert x_i \Vert^2} \cdot \sqrt{ \sum_{j \in S_\ell} \Vert x_j  \Vert^2} \leq \left(\sum_{k} \sqrt{\sum_{i \in S_k}  \Vert x_i \Vert^2} \right)^2 \leq t \Vert{x}\Vert^2\,,
\end{eqnarray*}
where we have used the Cauchy-Schwarz inequality in the final step, the claim follows.
\end{proof}

Note that the above result shows that $A$ does not need to have the $ts$-RIP in order for the hierarchical operator to exhibit the corresponding HiRIP. 
We may in particular choose $t=N/s$ and obtain an operator that acts isometrically on any vector with sparse blocks. 
In terms of sample complexity, the above result is still a bit opaque. 
By making a particular choice of $t$ and using the methods of Gaussian random matrices discussed in Sec.~\ref{sec:Gaussian}, one can derive the following result (see
Ref.\ \citeWE{gross2021hierarchical} for a proof).

\begin{proposition}[Sample complexity]
   Let $({B}_i)_i$ and $\mathcal{B}$  be as in Theorem~\ref{th:incoherent_blocks}.
   Assume that
   \begin{equation*}
   \left(t\mu_{\mathrm{block}}^{(2\sigma,2\sigma)}(\mathcal{B})\right)^2\leq \frac N {\log(N)}
   \end{equation*}
   and choose $A\in \mathbb{K}^{M\times N}$ as a Gaussian matrix. Let $\delta,\epsilon>0$. Provided that 
    \begin{align*}
        M \geq C\left( \left(t\mu_{\mathrm{block}}^{(2\sigma,2\sigma)}(\mathcal{B})\right)^2 \cdot \tfrac{1}{\delta^2}\log\left(\frac{N\left(1+\sup_i \delta_{\sigma(B_i)}\right)^2}{\left(t\mu_{\mathrm{block}}^{(2\sigma,2\sigma)}(\mathcal{B})\right)^2}\right) + \log\left(\tfrac1\epsilon\right)\right),
    \end{align*}
    where $C$ is a universal numerical constant, the hierarchical measurement  operator $\mathcal{H}$ defined by $A$ and $(B_i)_{i\in N}$ obeys
    \begin{align*}
        \delta_{t,\sigma}(\mathcal{H}) \leq \delta +\sup_i \delta_{\sigma}(B_i)
    \end{align*}
    with a probability at least $1-\epsilon$.
\end{proposition}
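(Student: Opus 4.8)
The plan is to combine the deterministic structural bound of Theorem~\ref{th:incoherent_blocks} with the Gaussian concentration toolkit of Sec.~\ref{sec:Gaussian}, reading the quantity $k := (t\,\mu_{\mathrm{block}}^{(2\sigma,2\sigma)}(\mathcal B))^2$ as an \emph{effective block-sparsity}: the content of the proposition is that the random factor $A$ needs only a $k$-level near-isometry, not the $t$-level isometry that a crude estimate would demand. Writing $\beta := \sup_i \delta_\sigma(B_i)$, I want to exhibit, for a suitable subdivision parameter $s$, a realization of $A$ for which Theorem~\ref{th:incoherent_blocks} already yields $\delta_{t,\sigma}(\mathcal H)\le \beta+\delta$.

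First I would fix the subdivision. Apply Theorem~\ref{th:incoherent_blocks} with inner block-sparsity $s$ and outer multiplicity $\tau=\lceil t/s\rceil$, so that the target level obeys $\tau s \ge t$ and $\delta_{t,\sigma}(\mathcal H)\le\delta_{\tau s,\sigma}(\mathcal H)$ by monotonicity. This gives $\delta_{t,\sigma}(\mathcal H)\le \beta + \delta_s(A)\,\beta + (t/\sqrt s)\,\delta_{2s}(A)\,\mu_{\mathrm{block}}^{(2\sigma,2\sigma)}(\mathcal B)$. The decisive algebraic observation is that the choice $s \approx k = (t\,\mu_{\mathrm{block}}^{(2\sigma,2\sigma)}(\mathcal B))^2$ makes the coherence prefactor collapse, since then $(t/\sqrt s)\,\mu_{\mathrm{block}}^{(2\sigma,2\sigma)}(\mathcal B)\approx 1$; the cross term reduces to $\delta_{2s}(A)$, and, using $\delta_s(A)\le\delta_{2s}(A)$, the whole excess over $\beta$ is at most $(1+\beta)\,\delta_{2s}(A)$. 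The hypothesis $(t\,\mu_{\mathrm{block}}^{(2\sigma,2\sigma)}(\mathcal B))^2\le N/\log N$ is exactly what guarantees $s\lesssim N$, so that this effective sparsity is admissible and the logarithm $\log(N/s)$ stays positive.

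With $s$ fixed, I would invoke the folklore Gaussian RIP bound recalled in Sec.~\ref{sec:Gaussian} (the $\sigma=1$ specialization of the Gaussian HiRIP result, where the term $\log(n/\sigma)$ vanishes) to control $A$: a Gaussian $A\in\K^{M\times N}$, renormalized by $1/\sqrt M$, satisfies $\delta_{2s}(A)\le \delta/(1+\beta)$ with probability at least $1-\epsilon$ as soon as $M\gtrsim (1+\beta)^2\delta^{-2}\big(s\log(N/s)+\log(1/\epsilon)\big)$. Substituting $s\approx k$ produces a sampling rate of the advertised shape, $M\gtrsim (1+\beta)^2\delta^{-2}\big(k\log(N/k)+\log(1/\epsilon)\big)$, and feeding $\delta_{2s}(A)\le\delta/(1+\beta)$ back into the displayed bound yields $\delta_{t,\sigma}(\mathcal H)\le\beta+\delta$ on the same event.

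The main obstacle is cosmetic-looking but real: this route places the factor $(1+\beta)^2$ as a prefactor multiplying both the $k\log(N/k)$ term and $\log(1/\epsilon)$, whereas the proposition carries $(1+\beta)^2$ \emph{inside} the logarithm and $\log(1/\epsilon)$ as a free additive summand. Closing this gap requires bypassing the black-box RIP of $A$ and redoing the concentration directly for $\mathcal H$: one expands $\|\mathcal H x\|^2=\langle A^*A,G\rangle$ with the block-Gram matrix $G_{ij}=\langle B_ix_i,B_jx_j\rangle$ and $\mathbb{E}[A^*A]=I$, absorbs the mean $\trace G$ into the $\beta$-term via Lemma~\ref{lem:schatten}, controls the diagonal fluctuation by a cheap union bound over the $N$ column norms, and bounds the off-diagonal Gaussian chaos by a Hanson--Wright tail in which the Frobenius and operator norms of $G$ are estimated through $|G_{ij}|\le \mu_{\mathrm{block}}^{(2\sigma,2\sigma)}(\mathcal B)\,\|x_i\|\,\|x_j\|$ for $i\neq j$. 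It is precisely this sharp norm bookkeeping, paired with a net whose effective dimension is $k$ rather than $t\sigma$, that moves $(1+\beta)^2$ inside the log and separates $\log(1/\epsilon)$; the integrality of $s$ and $\tau$ is a minor point dispatched by ceilings. I would refer to Ref.~\citeWE{gross2021hierarchical} for the execution of this sharper estimate.
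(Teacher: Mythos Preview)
Your proposal is correct and matches the paper's approach. The paper itself does not give a proof of this proposition; it only remarks that one proceeds ``by making a particular choice of $t$ and using the methods of Gaussian random matrices discussed in Sec.~\ref{sec:Gaussian}'' and then defers to Ref.~\citeWE{gross2021hierarchical}. Your choice of the subdivision parameter $s\approx (t\,\mu_{\mathrm{block}}^{(2\sigma,2\sigma)}(\mathcal B))^2$, which collapses the coherence prefactor $\tau\sqrt{s}\,\mu$ to order one, is exactly the ``particular choice'' the paper alludes to, and your subsequent invocation of the Gaussian RIP for $A$ at level $2s$ is the intended use of Sec.~\ref{sec:Gaussian}. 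Your honest accounting of the residual discrepancy---that the black-box route leaves a factor $\delta^{-2}$ in front of $\log(1/\epsilon)$ and $(1+\beta)^2$ as a global prefactor rather than inside the logarithm---and your sketch of how a direct Hanson--Wright analysis of $\langle A^*A,G\rangle$ removes it, is precisely why the paper points to the external reference for the sharp constants.
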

This proposition shows that if $\mu_{\mathrm{block}}^{(2\sigma,2\sigma)}(\mathcal{B})$ is small enough, the number of `Gaussian linear combinations' we take with $A$ does not have to grow linearly in $t$ in order to establish a $(t, \sigma)$-RIP -- instead, only $(t\mu_{\mathrm{block}}^{(2\sigma,2\sigma)}(\mathcal{B}))^2$ is needed. 

The square dependence here on $(t\mu^{(2\sigma,2\sigma)})$ is of course inferior compared to the linear dependence of the sparsity we can achieve with the help of Theorem~\ref{th:hihiRIP}. It is unclear whether this is merely an artefact of the proof.

This results end our discussion of the hierarchical operators, and with that our theoretical results on hierarchical restricted isometry properties. 

\section{Sparse de-mixing of low-rank matrices} \label{sec:sparseLowRank}

Generally, hierarchically sparse vectors arise from recursively assuming nested groupings of the vector entries to be sparsely non-vanishing.
Another generalization of hierarchically structured vectors arise when we replace the sparsity assumption with another structure assumption such as a low-rank when suitably reshape.
One of the simplest of such examples is the de-mixing of a sparse sum of low-rank matrices from linear measurements. 
For $i \in [N]$, let $\mathcal A_i: \K^{n \times n} \to \K^m$ be linear maps  and $\rho_i \in \K^{n \times n}$ be matrices of rank at most $r$. 
The problem of \emph{de-mixing low-rank matrices} is to reconstruct the matrices $\rho_i$ given data of the form 
\begin{equation*}
    y = \sum^N_{i=1} \mathcal A_i(\rho_i)\, .
\end{equation*}
A further structure assumption might be that out of the $N$ matrices $\rho_i$ actually only a number of $s$ are non-vanishing, giving rise to the problem of de-mixing a sparse sum. 
We can straight-forwardly cast the problem as the reconstruction problem of a hierarchically structured vector.  
To this end, we set $X = \sum_{i=1}^N e_i \otimes \rho_i$. We can regard $X$ as a `vector' in $\K^{Nn \times n}$ of matrix-valued blocks of rank-$r$ and at most $s$ vanishing blocks.

Compared to $(s,\sigma)$-sparse vectors, we have replaced the non-vanishing $\sigma$-sparse blocks by low-rank matrices.
The de-mixing problem of a sparse sum of low-rank matrices then is the task to reconstruct such a hierarchically (block) sparse, (block-wise) low-rank vector $X$ from linear measurements.

The principle strategy of hierarchical hard-thresholding of Sec.~\ref{sec:hiRecoveryAlgs} carries over to hierarchically sparse, low-rank vectors. 
The projection onto the set of rank-$r$ matrices is given by the hard-thresholding of the singular values. 
Let $\rho \in \K^{n \times n}$ have singular value decomposition $U \operatorname{diag}(\Sigma) V^*$ with a vector of singular values $\Sigma \in \K^n$. 
We define 
\begin{equation*}
    P_{r} (\rho) = U \operatorname{diag}(\T_r(\Sigma)) V^*\, . 
\end{equation*}
Basically, replacing the application of $\T_\sigma$ in the hierarchically thresholding Alg.~\ref{WEalg:hierarchical_hard_thresholding} yields a projection onto hierarchically sparse, low-rank vectors which we will refer to as $\bar\T_{s, r}$.

\begin{algorithm}[tb]
	\caption{SDT-algorithm}\label{WEalg:SDTalgCompact}
	\SetAlgoLined
    \SetKwInOut{Input}{input}\SetKwInOut{Output}{output}
    \SetKwInOut{Init}{initialize}\SetKwFunction{Break}{break}
    \DontPrintSemicolon
		\Input{Data $y$, measurement $\mathcal A$, sparsity $s$ and rank $r$ of signal}
		\Init{$X^0=0$.}
		\Repeat{stopping criterion is met at $l=l^\ast$}{
			Calculate step-widths $\mu^l$ \;
			$X^{l+1} 
					= \bar\T_{s,r}
					\left(
						X^l + \operatorname{diag}(\mu^l) P_{\mathcal{T}_{X^l}}
						\left(
							\mathcal{A}^*
							\left(
								y - \mathcal{A}(X^l)
							\right)
						\right)
					\right)$ \;
		}
		\Output{Recovered signal $X^{l^\ast}$}
\end{algorithm}

Modifying the projective gradient-descent of the HiIHT algorithm with this projection yields the so-called \emph{sparse de-mixing thresholding} (SDT) algorithms, Alg.~\ref{WEalg:SDTalgCompact} \citeWE{RothEtAl:2020:Semidevicedependent}. 
In contrast to the structure of a union of subspaces of sparse vectors, the set of rank $r$ matrices constitutes an embedded differential manifold in the linear vector space of all matrices. 
The geometrical structure can be exploited in iterative hard-thresholding algorithms by projecting the gradient of the embedding space in the descent step onto the tangent space of the manifold at the current iterate \citeWE{WeiEtAl:2016, AbsilSepulchre:2009,Vandereycken:2013}. 
At point $\rho$, the tangent space of the manifold of rank-$r$ matrices is the linear span of the set of matrices that have the same row or column space as $\rho$ \citeWE{AbsilSepulchre:2009}. 
For a hierarchically sparse, low-rank vector $X = \sum_{i=1}^N e_i \otimes \rho_i$, 
we use the projection onto the tangent space for each block. 
We denote by $P_{V_i}$ and $P_{U_i}$ be the projection onto the row and column space of $\rho_i$, respectively. 
For $\rho_i$ vanishing we set the projections to be the identity. 
We define $P_{\mathcal T_X}: \K^{Nn \times n} \to \K^{Nn \times n}$ as $G = \sum_{i=1}^N e_i \otimes g_i \mapsto  \sum_{i=1}^N e_i \otimes [g_i - (\operatorname{Id} - {P_{U_i}})g_i(\operatorname{Id} - {P_{V_i}})]$. 
The particularity of the SDT algorithm is that we allow for a different step-size for each matrix block.
We refer to Ref.~\citeWE{RothEtAl:2020:Semidevicedependent} for more details on the algorithm and Ref.~\citeWE{BSTrepo} for an implementation. 
The SDT algorithm without the sparse-thresholding operation to determine the block support coincides with algorithm proposed in Ref.~\citeWE{StrohmerWei:2017}.

\begin{figure}[tb]
    \centering
	\includegraphics[width=.7\textwidth]{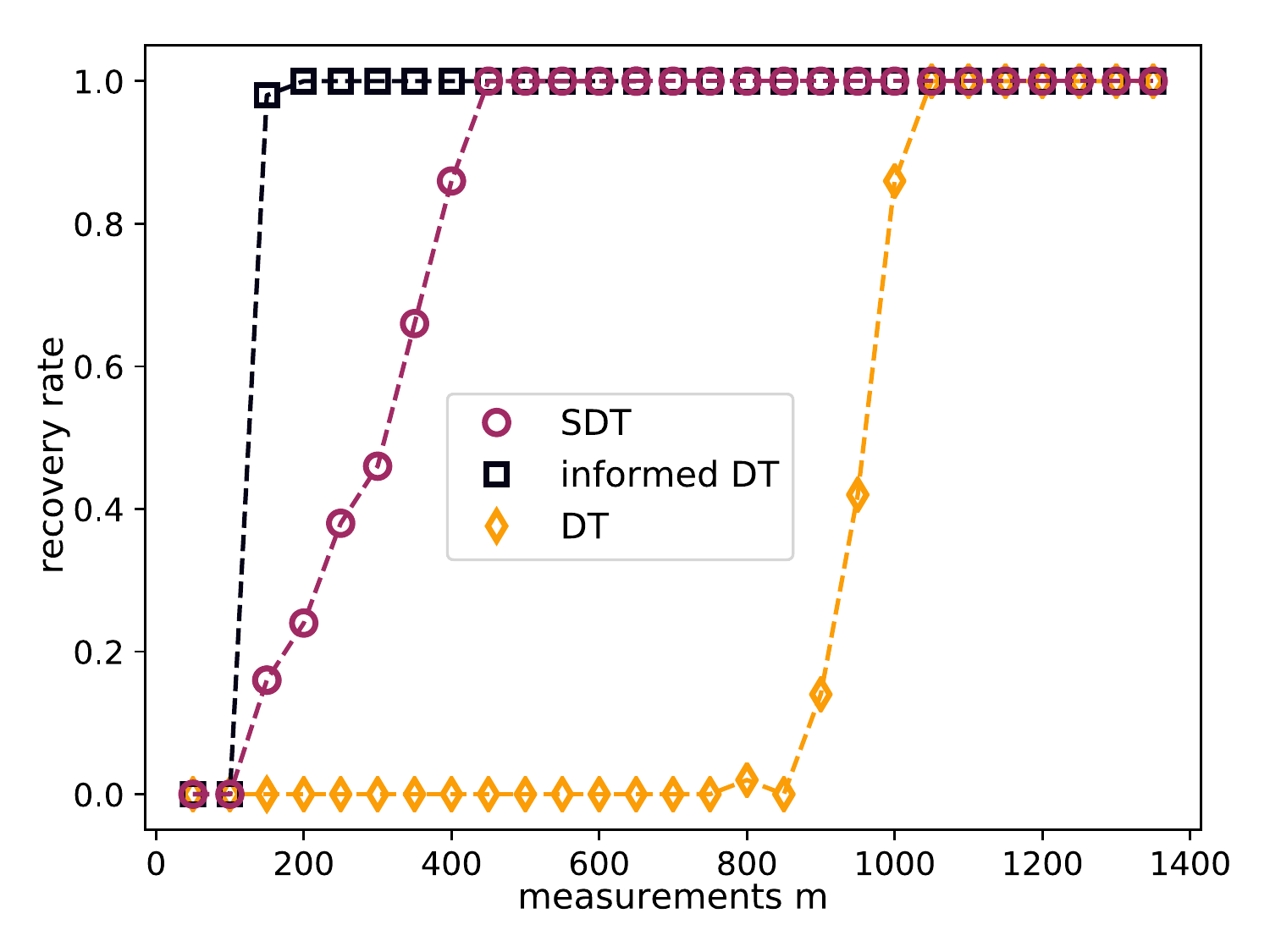}
	\label{WEfig:numericsGauss} 
	\caption{
	The figure (taken from Ref.~\protect\citeWE{RothEtAl:2020:Semidevicedependent}) displays the recovery rate for the SDT in different variants for different values of $m$ for random Gaussian measurements. 
	\emph{DT} refers to the SDT algorithms without the sparsity constraint (SD), and \emph{informed DT} to the SDT algorithm restricted to the correct support.
	An instance is considered successfuly recovery if it deviates from the true signal by less than $10^{-3}$ in Frobenius norm. 
	Each point is averaged over $50$ iterations and signal instances with $r=1$, $n=16$, $N=10$ and $s=3$.
	One observes nearly coinciding recovery performances for the informed DT and the SDT algorithm. In
	 comparison, the DT algorithm requires significantly more samples for recovery. 
	}
\end{figure}

Following the blue-print of model-based compressed sensing one can also establish a recovery guarantee based on a RIP condition custom-tailored to the hierarchical structure at hand. 
For random Gaussian measurement ensembles this gives rise to a sampling complexity of 
\begin{equation*}
    \delta^{-2} [s \log(N/s) + (2n + 1) r s \log \tfrac1\delta] 
\end{equation*}
to guarantee the correct recovery of $X \in \K^{Nn \times n}$ with at most $s$ non-vanishing blocks of rank $r$ \citeWE[Theorem 6]{RothEtAl:2020:Semidevicedependent}. 
Many results derived in Sec.~\ref{sec:hiRIP} that establish the HiRIP for hierarchically sparse vectors for different measurement ensembles, can be  generalized to hierarchically sparse, low-rank vectors.  
This allows one to guarantee recovery by the SDT algorithm for a large class of measurement ensembles.

Compared to an algorithm that does not exploit the sparsity of the de-mixing problem, the SDT algorithm can exhibit a significant improvement in the sampling complexity in relevant parameter regimes, Fig.~\ref{WEfig:numericsGauss}. 

Hierarchically sparse, low-rank vectors certainly constitute another important class of hierarchically structured signals as it encodes the de-mixing problem of a sparse sum of low-rank matrices. 
The theme of hierarchically combining low-rank and sparse structure assumptions in nested grouping of entries gives rise to a plethora of structures all of which can be efficiently reconstructed using recursive combinations of the hierarchical thresholding method introduced above.

\section{Selected applications} \label{section:applications}

\subsection{Channel estimation in mobile communication} \label{subsec:MIMO} 

In mobile communication, a lot of users are simultaneously communicating  with a base station through electro-magnetic waves. Let us model the messages a user wants to transmit with a sequence $c \in \K^n$. To send this message, the user must first translate the message to a wave. A popular scheme for this is so-called \emph{OFDM (Orthogonal Frequency-Division Multiplexing)}. 
This scheme can be imagined as each $c_k$ giving rise to a complex exponential, a so-called tone, $b(\omega) = [1, e^{-i\omega t_1}, \dots e^{-i\omega t_{n-1}}] \in \K^{1 \times n}$, where $\omega$ is the frequency and $t_1, \dots, t_{n-1}$ are some discretization times. 
In OFDM, a fixed grid of the form $\omega_k=2\pi k \overline{\omega}$, $k\in [n]$ is used where $\overline{\omega}$ is the normalized frequency. Mathematically, this corresponds to applying the discrete Fourier-transform to $c$.

As the electromagnetic waves travel from the user to the base station, they scatter on random features, e.g. buildings and trees, in the environment. This scattering causes random phase and amplitude shifts, modelled by so-called complex gains $\rho_p$. It also means that a single transmission results in several incoming wave-fronts, each with a different angle of arrival. 
This situation can be utilized if the the base station has several antennas arranged in an array: When the wave-front arrives at the antenna array, the wave-front travels slightly different distances before arriving at each antenna, i.e. if a `$1$' arrives at antenna $0$, antenna $k$ will receive an `$a_k(\theta)$', where $\theta$ denotes the angle of the wavefront. Here, $a=[a_0, \dots, a_{n-1}]: [-\pi, \pi] \to \K^{1 \times n}$ is a function, often referred to as the \emph{antenna manifold} in the communication literature. For the  popular \emph{uniform linear array (ULA)}, in which the antennae are placed at a uniform separation $d$ along a straight line, the antenna manifold is 
after a change of variables $u= d\sin(\theta)$ given by
\begin{align*}
    a(u) = \begin{bmatrix} 1, e^{2\pi d iu}, e^{4\pi d iu}, \dots ,e^{2(n-1)\pi d iu} \end{bmatrix} \,. 
\end{align*}
The parameter $u$ actually takes on values in the entirety of $[-d,d]$, but let us for now assume that it lies on some grid $\{ -\tfrac{d}{2N}, \dots,  \tfrac{d}{2N} \}$.

Combining these two models, we see that for a specific user, all transmitted signals result in a collective measurement of the form $
    \sum_{\ell=1}^L \rho_p a(u_p)^*\langle b(\omega_p)^*,c\rangle$
where $(\omega_p,u_p)$ is given by the delay and angle of the $k$th wave-front. The communication is thus characterized by the \emph{channel matrix} \citeWE{chen:2016pilot}
\begin{align*}
    H= \sum_{p=1}^L \rho_p a(u_p)^* b(\omega_p)   \in \K^{N\times n} \,.
\end{align*}
Once we know $H$, the base station can easily decode any number of sent messages. Note that as long as the environment and the position of the user does not change drastically, $H$ is expected to stay roughly constant.

Now suppose that we are only given a low dimensional sample of $H$. To be concrete, define sub-sampling operators $P_u \in \K^{M \times N}, P_\omega\in \K^{m \times n}$ in angle and delay, and assume that we only observe $P_u H P_\omega^\top$. Can we still recover the entire matrix? To do this, we may utilize that, according to the above discussion, it has a sparse representation in the delay-angle domain. Indeed, defining $A = [e^{2k \pi iu_j}]_{k, j \in [N]} \in \K^{N \times N}$ and $B= [e^{-it_k\omega_\ell}]_{k, \ell \in[n]} \in \K^{n \times n}$, we get
\begin{align*}
    P_uHP_\omega^\top= P_u A \left(\sum_{p\in [L]} \rho_p e_{u_{j_p}}\otimes e_{{\omega_{\ell_p}}} \right) B^*P_\omega^\top = (P_u A \otimes P_\omega B) X,
\end{align*}
with $X= \sum_{p=1}^L \rho_p e_{u_{j_p}}\otimes e_{\omega_{\ell_p}}$. 
Note that $X$ is not only sparse, but hierarchically sparse: only a few angle blocks are active, and for each such angle, only a few delays $\omega_k$ are utilized and vice versa. 
In fact, it is a reasonable assumption that the angles for the $L$ paths are distinct, leading to a $(1,L)$-sparse ground truth. 
We further observe that sampled $H$ is a Kronecker product measurement of $X$, where the terms of the Kronecker product are sub-sampled Fourier matrices. Thus, the results of Sec.~\ref{sec:HiMesOp} imply that the recovery indeed is possible and provide an explicit sampling complexity.

\begin{figure}[tb]
    \centering
    \includegraphics[width=.7\textwidth]{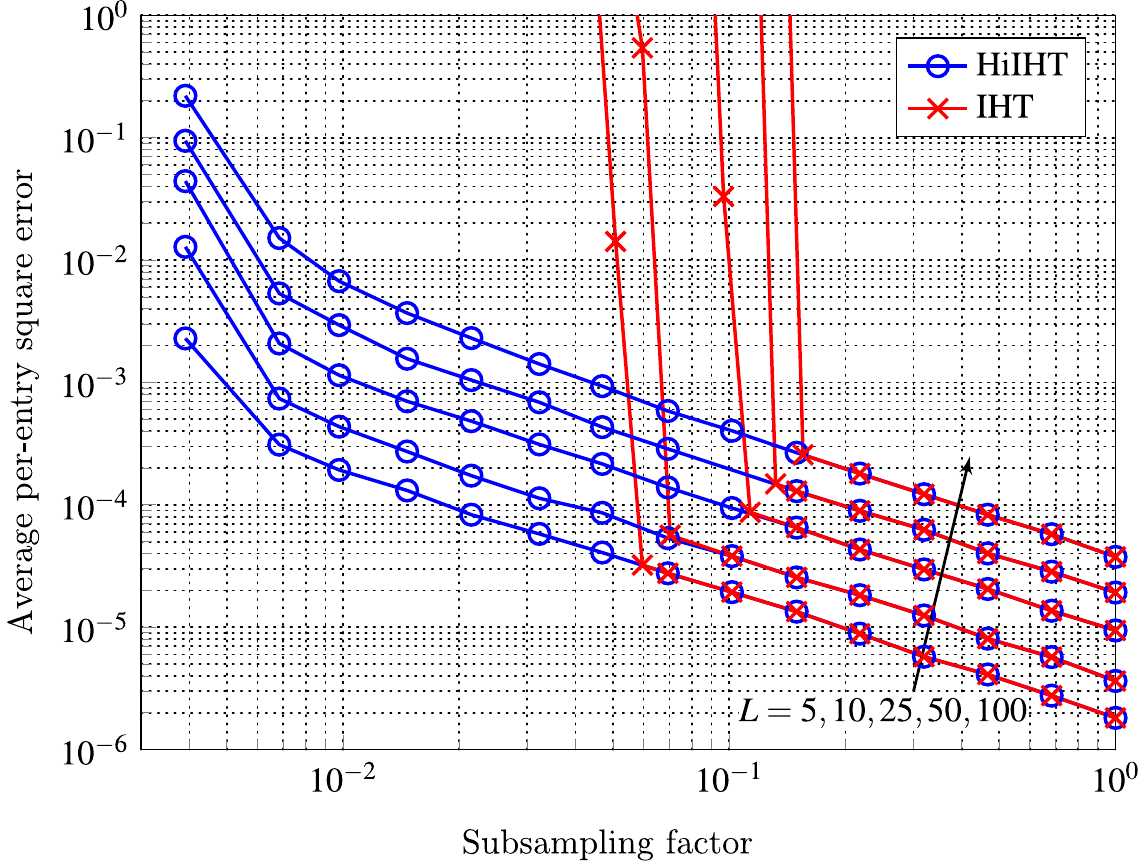}
    \caption{Comparision of HiIHT and IHT performance for channel matrix reconstruction. \textcopyright 2019 IEEE. Reprinted, with permission, from Ref.~\protect\citeWE{wunder2019low}.}
    \label{WEfig:MiMoHiIHT}
\end{figure}

In Fig.\ \ref{WEfig:MiMoHiIHT}, the performance of HiIHT and IHT 
are compared for $m=n=256$ and $N= 1024$. We generate data synthetically, and inject the measurement with Gaussian noise of an SNR $10$dB. The recovery quality is measured in terms of the mean per-entry square error $\frac{1}{nN}\Vert H-\hat{H} \Vert^2$ between the actual channel matrix $H$ and the estimate $\hat{H}$. This error is plotted against the sub-sampling factor $M/N$ for different values of $L$. We see that HiIHT handles a small sub-sampling factor considerably better than IHT. Indeed, only accessing one percent of the available antennas is enough to achieve reasonable performance with HiIHT, whereas IHT fails when less than about 10 percent of the antennas are utilized.

The communication setting presented here can be extended in several directions: First, we may drop the assumptions on the delays and angles to be on a grid  -- in the  off-the-grid case the vector $X$ is arguably still approximately sparse.  Second, we can model the case of multiple users by adding a third level to the hierarchical signal.  On this level sparsity naturally emerges assuming a sporadic user activity. 
We refer to Ref.\ \citeWE{wunder2019low} for details.

\subsection{Secure Massive Access}  \label{sec:blinddeconv}

With the rise of new communication technologies such as the internet of things (IoT) and tactile internet (TI), the amount of devices virtually explodes, and with it the amount of sensitive information gathered from various sensors and transmitted over the air. This development poses significant challenges on the security of communication channels and demands for new physical layers of security. In particular, it calls for fast and scalable low-overhead security schemes suitable for the frequent burst of spontaneous communication between low-complexity devices with a base station.
Here, we use the hierarchical measurement framework to design a secure massive access procedure based on blind deconvolution, see also the discussion on bi-sparse structures in Sec.~\ref{section:signal_model}. More details can be found in Ref.\  \citeWE{wunder2018secure}.

A base station sends out known pilots to enable all \emph{user equipments (UEs)} to measure the channel between the station and the UE. 
The channel is here modeled as a filter in $\K^{N}$, where $N$ is the length of the delay period. For each transmitting UE $p \in [N_d]$ and receiving base station antenna $q \in [N_r]$, there is one filter 
\begin{equation*}
h_{p,q}=(h_{p,q,1},\ldots, h_{p,q,i}, \ldots, h_{p,q,N})\in\K^{N}.
\end{equation*}
 The concrete appearance of the filters are again determined by delays caused by reflections on random physical features in the environment. Therefore, it is reasonable to assume that each $h_{p,q}$ is sparse, and, for fixed UE $q$, all channels $h_{p,q}$ for $p=1, \dots, N_t$ share the same sparsity pattern.

As in the previous section, the UE transmit their sequences $c_p \in \K^E$ by first linearly encoding them into signals $x_p=B_pc_p$ using a codebook $B_p \in \K^{N\times E}$, and then sending them over the channel. In an IoT scenario, the messages typically are very short, so that it can be assumed that they can be encrypted as sparse sequences $c_p$. During transmissions, these are convolved with the channel vectors, so that each of the base station's antennas receives a superposition of the UEs' signals, 
\begin{equation*}
\label{eq:conv_multi_user}
y_q = \sum\limits_{p=1}^{N_r} h_{p,q}\circledast (B_p c_p) +z_q 
\end{equation*} 
with $q=1,\ldots,N_t$ and $\circledast$ denoting the circular convolution. We may now \emph{lift} \citeWE{ling2017blind} the bilinear operation $(c_p,h_{p,q}) \to h_{p,q} \circledast B_pc_p$ to a linear operation $\mathrm{conv}_p: \K^{E\times N} \to \K^N$ on the matrix $b_ph_{p,q}^\top \in \K^{E \times N}$ resulting in 
\begin{equation}
\label{eq:lifted_bd_multi_user}
y_q = \sum\limits_{p=1}^{N_r} \mathrm{conv}_p(b_p h_{p,q}^\top) + z_q\, .
\end{equation}
We observe that the channel estimation task at the base station becomes 
the problem of simultaneously performing a blind deconvolution and de-mixing, naturally formalized as the linear reconstruction of a signal 
\begin{align*}
  X_q=(b_1 h_{1,q}^\top, \dots, b_{N_d} h_{N_d,q}^\top) \in (\K^{E \times N})^{N_d} \sim \K^{N_d\cdot E \cdot N}.
\end{align*}

The signal further exhibits the following structure:
Our assumptions of $\sigma$-sparse channels and $s$-sparse messages imply that the matrices $b_p h_{p,q}^\top$ are all $(s,\sigma)$-bisparse. 
As disscussed in Sec.~\ref{section:signal_model}, we may relax this to simple hierarchical $(s,\sigma)$-sparsity. 
Additionally assuming a sparse user activity at a given time, i.e. $b_p \neq 0$ only for $\mu$ users, the vector $X_q$ is a three-level $(s, \sigma, \mu)$-sparse vector.  
Note that the operator $\mathrm{conv}_p$ has a structure that is not covered by our theoretical results. Still, we may try to recover it using the HiHTP-algorithm.

We conduct simulations with
$N_t=1$ receive antenna and $N_r=10$ total UEs. We set $N=1024$ and $N=E=128$.
For each of the $N_r$ users a $\sigma$-sparse channel $h_k\in \R^{E}$ is drawn with the locations of the non-zeros distributed uniformly and entries drawn from the standard normal distribution. The signals are computed as $x_k = B c_k$ were $B\in \R^{N\times E}$ is a Gaussian random matrix and $c_k\in \R^E$ is $s$-sparse with values in $\{-1,1\}$ if the user is active, and $0$ if the user is not active. 
This results in the data $y_1\in\R^{N}$ as defined in \eqref{eq:lifted_bd_multi_user}. 

We vary the number of active users $\mu$, as well as the sparsities $s$ and $\sigma$. 
The Figs.~\ref{WEfig:success2}--\ref{WEfig:success5} below show the rate of successful recovery for varying number of active users, averaged over 20 runs per setup. The x- and\ y-axis show the channel sparsity $\mu$ and\ the signal sparsity $s$, respectively. As can be seen, the HiHTP-algorithm is indeed capable of recovering the ground truth, as long as the sparsity levels are low enough.
\begin{figure}[tb]
	\centering
    \begin{minipage}[b]{0.475\textwidth}
    \includegraphics[width=\textwidth]{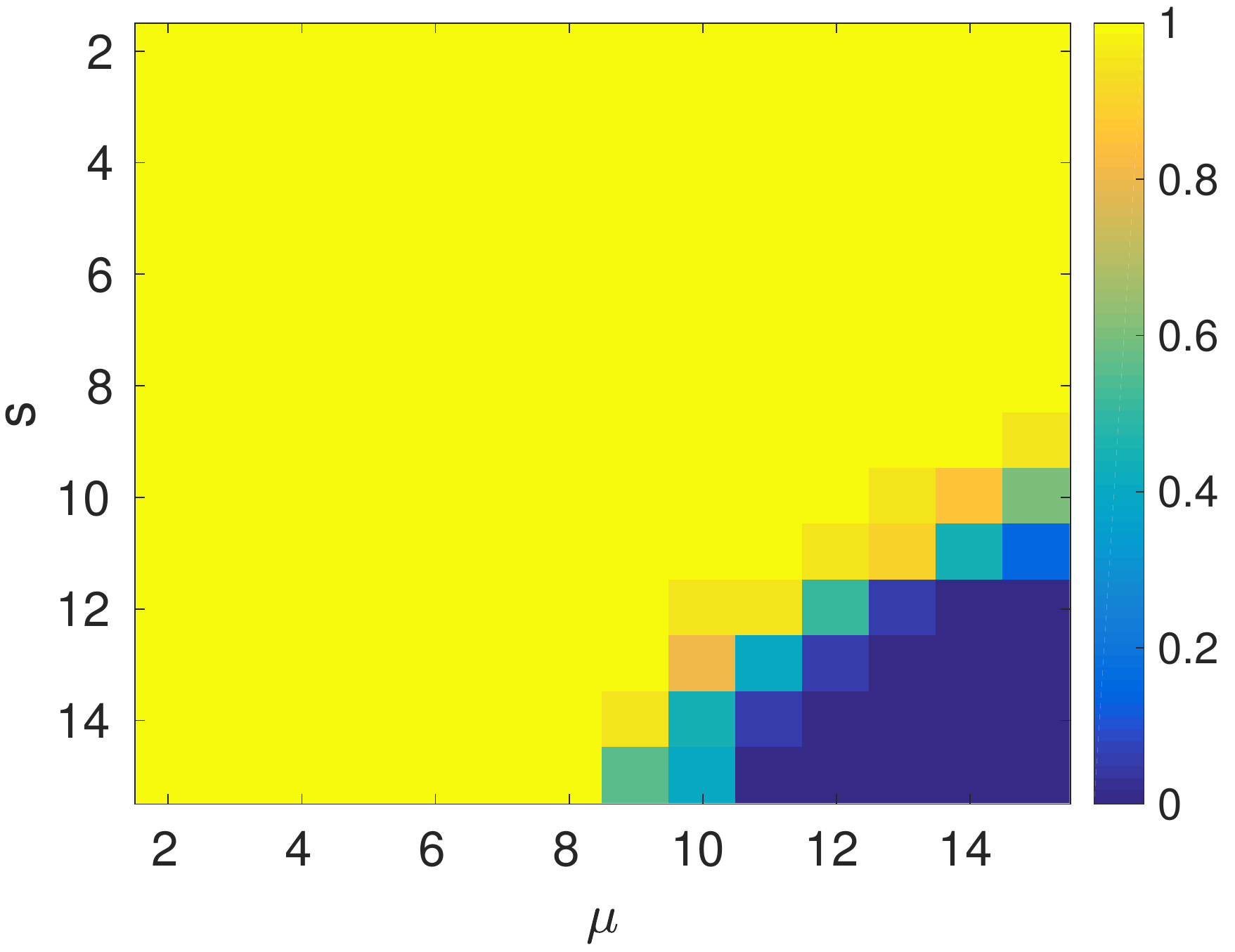}
    \caption{Recovery rate for 2 of 10 active users. 
    \textcopyright 2018 IEEE. Reprinted, with permission, from Ref.~\protect\citeWE{wunder2018secure}.}
      \label{WEfig:success2}
    \end{minipage}
    \hfill
    \begin{minipage}[b]{0.475\textwidth}
    \includegraphics[width=\textwidth]{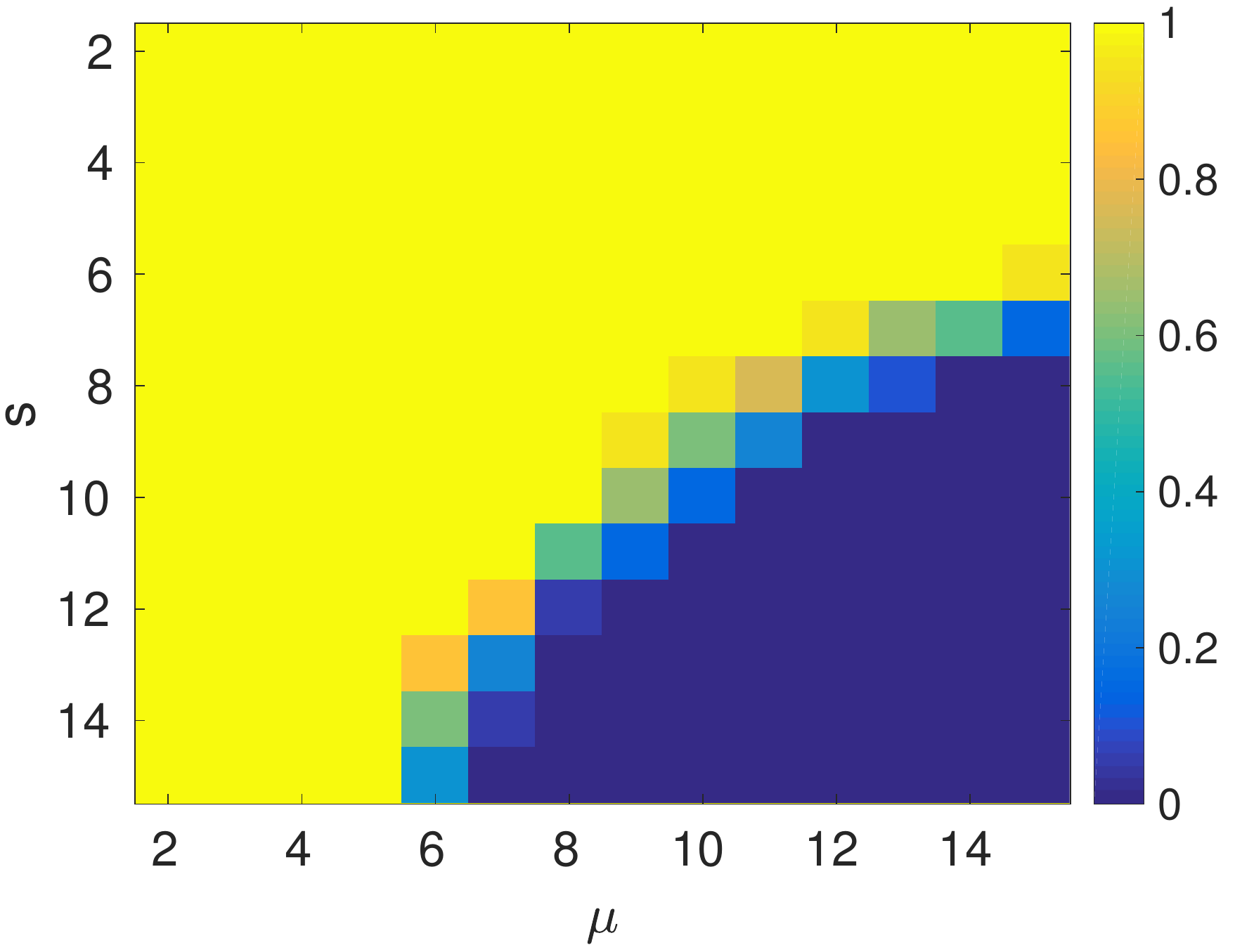}
    \caption{Recovery rate for 5 of 10 active users. 
    \textcopyright 2018 IEEE. Reprinted, with permission, from Ref.~\protect\citeWE{wunder2018secure}.}
    \label{WEfig:success5}
    \end{minipage}
\end{figure}

An interesting feature of the model is that it can be used to generate a secure communication scheme. 
To this end, we {make use of} the \emph{reciprocity} {of the channel}: The channel $h_{p,q}^{\uparrow}$ for transmission from UE $q$ to base station antenna $p$ is equal to the channel $h_{p,q}^{\downarrow}$ for transmission in the other direction. This reciprocity condition is fulfilled for modern off-the shelf WiFi devices \citeWE{reciprocity}. 
{Due to the reciprocity, the channel itself can serve} as a  source of shared randomness for the secret key generation. 
The communication protocol consists of two phases:

\noindent{\bf Phase 1}:
\begin{enumerate}
\item The base station sends a predefined pilot signal to all UEs.
\item Each UE $q$ measures the complex-valued channel gains $h_q^{\downarrow}= (h_{p,q}^{\downarrow})_{p \in [N_r]}$.
\item Each UE encrypts his/her message $m$ to a sequence $c_p=f(m,h_{q}^{\downarrow})$, using some encryption scheme $f$ and $h_{q}^{\downarrow}$ as a a random encryption key.
\end{enumerate}

\noindent{\bf Phase 2}
\begin{enumerate}
\item All the UEs $q$ send their encrypted \emph{sequences} $c_q$ to the base station using the scheme discussed above. The encoding operators $B_p$ are left public.
\item The base station receives the superposition of all the convolutions of the cipher text with the respective channels. {With a hierarchical thresholding algorithm, the station  inverts \eqref{eq:lifted_bd_multi_user}}, and, thus, gains knowledge of the cipher-texts $c_p$ and channels $h^{\uparrow}_{p,q}$. 
\item  Due to reciprocity $h^{\uparrow}_{p,q}= h^{\downarrow}_{p,q}$, the base station thereby obtains the encryption keys $h_q^{\downarrow}$, and decrypts the cipher-texts.
\end{enumerate}
The security of the scheme relies on the assumption that the channels of different users are independent of each other and can not be inferred from another position. Unless a man-in-the middle has access to the antenna of an UE, the eavesdropper can not use his/her channel coefficients to recover the message of another user.

We note that small variations between both channels, i.e.\ small violations of reciprocity, can be tolerated by adjusting the key generation process. One can for example quantize the channel gain sufficiently coarse to equalize the keys.
Here, the hierarchical framework is applied to solve a blind deconvolution and demixing problem. Refs.~\citeWE{gross2021hierarchical, wunder2021measure} present further examples of the hierarchical measurement framework applied to massive random access without a built-in security scheme.

\subsection{Blind quantum state tomography}
\label{sec:blind_quantum_state_tomography}

Quantum communication allows for the transmission of data under unprecedented levels of security \citeWE{RevModPhys.74.145}. Here, the security proofs are neither based on assumptions on the computational hardness of certain mathematical problems, nor on the feasibility of practically reverting or predicting the randomness of physical processes: Instead, there are proofs of security available based on the fundamental laws of nature themselves. Under mild assumptions, quantum key distribution can be proven secure under the 
most general attacks allowed by physics, within a
paradigm of closed laboratories. 
Simultaneously, the advent of novel quantum computing devices 
promises solving certain tasks with a significantly improved computational complexity 
compared to classical computing devices. 
These tasks include NP problems at the heart of established and universally employed cryptographic schemes. 
It is beyond the scope of the present article to introduce the various applications of the quantum technologies
\citeWE{Roadmap}. Instead, we here focus on a particular context in which hierarchical compressed sensing naturally comes into play: This is the task of semi-device dependently  identifying the 
state of a quantum device.
Methods for such characterization and certification tasks 
are important diagnostic tools in the development of quantum technologies. 
We refer to Refs.\ 
\citeWE{BenchmarkingReview, KlieschRoth:2020:Theory} for details.

The problem at hand here is the identification of quantum states prepared in some physical prescription. The recovery of unknown quantum states is called \emph{quantum state tomography}. 
A general quantum state is described by a trace-normalized, positive-definite complex matrix. Of particular interest are unit rank, so-called \emph{pure} quantum states or more generally 
low-rank quantum states.  
Ideally devices in quantum technologies operate or are envisioned to operate in pure quantum states of large dimensions. 
Quantum states of higher rank encode `classical' statistical mixtures of pure states typically produced by noisy operations. 
We denote the set of rank-$r$ quantum states by $\mathcal D^n_r \subset \C^{n\times n}$. 

An important diagnostic task for quantum devices is, thus, to learn the low-rank quantum state of the device from linear measurements. 
Exploiting the rank constraint on the quantum states in the recovery task is crucial to devise quantum tomography protocols working in state spaces of sizeable dimension.  
This renders compressed sensing method of crucial importance for quantum tomography \citeWE{Compressed,FlammiaGrossLiu:2012,ShabaniEtAl:2011,KalevKosutDeutsch:2015, SteffensEtAl:2017,RiofrioEtAl:2017,RothEtAl:2018:Recovering,QuantumReadout}.

That said, the apparata with which one performs the measurements can especially for near-term devices not be reasonably assumed to be 
fully characterized: Commonly there are calibrating parameters that are not fully known. An important practical problem is, thus, the recovery of a low-rank quantum state $\rho$ by means of measurement devices that are simultaneously themselves characterized by a
handful of parameters, giving rise to sparse
vectors $\xi$. 

In a linear approximation of the measurement device calibration, this leads to the problem of  \emph{blind (self-calibrating) quantum state tomography}:
Let $\mathcal A: \C^{nd\times d} \to \R^m$ be a linear map describing the measurement and calibration model.  
Given data $y = \mathcal A(X) \in \R^m$ and the linear map $\mathcal A$, recover $X$ under the assumption that 
\begin{equation}
	X \in \{\xi \otimes \rho \mid \text{$\xi \in \K^N$ $s$-sparse},\ \rho \in \mathcal D^n_r \} \subset \C^{Nn \times n}\, .
\end{equation}

The blind quantum state tomography problem can be regarded as a non-commutative analogon of 
 bisparse recovery problems where the data is bi-linear in two sparse vectors both to be recovered. 
Similarly to the vector case, already the projection onto the set of structured signal is an 
NP-hard problem. 
In fact, one can encode the \emph{sparse PCA problem} \citeWE{magdon-ismail_np-hardness_2017} and thereby CLIQUE into the task of finding the closest element of the form $\xi \otimes \rho$ with $\xi \in \K^N$, $\rho \in \mathcal D^n_r$ to a given $X \in \K^{Nn \times n}$ in Frobenius norm Ref.\ \citeWE[Theorem~ 3]{RothEtAl:2020:Semidevicedependent}.  
For this reason, it is not possible to directly derive an efficient algorithm based on a hard-thresholding operation for the blind quantum tomography problem.

\begin{figure*}[tb]
    \centering
\includegraphics[width=1\textwidth]{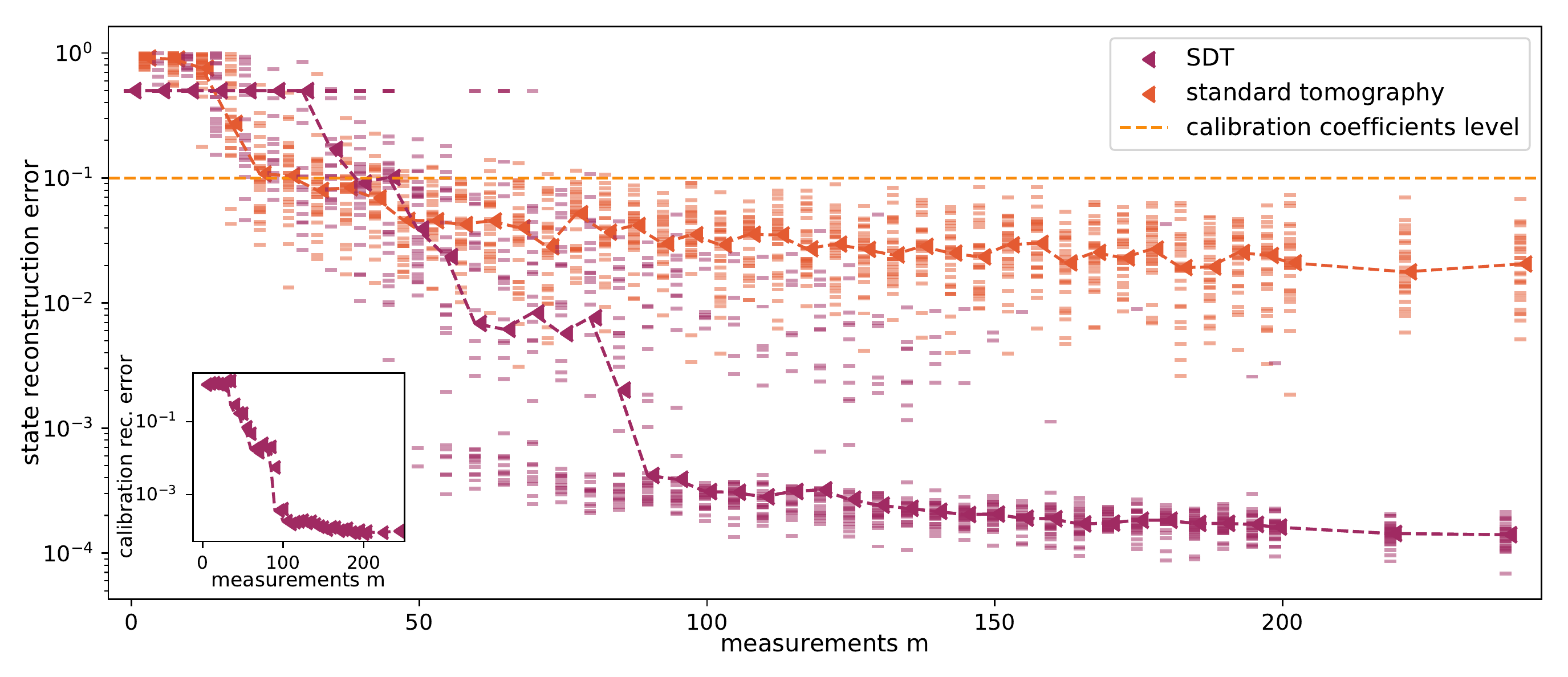}
	\vspace{-.7cm}
	\caption{ 
	\label{WEfig:numericsrandompauli}
	The figure (taken from Ref.~\protect\citeWE{RothEtAl:2020:Semidevicedependent}) displays the trace-norm reconstruction error for the SDT algorithm compared to a standard low-rank tomography algorithm for different number of measurements $m$ of sub-sampled random Pauli measurements. Each point depicts $30$ random measurement and signal instances with $r=1$, $d=8$, $n=10$ and $s=3$. The dotted lines indicate the median. The inline figure shows the mean $\ell_2$-norm reconstruction error of the calibration coefficients for the SDT algorithm.
	}
\end{figure*}

However, the problem of demixing a sparse sum of low-rank matrices introduced in Sec.~\ref{sec:sparseLowRank} can be seen as a relaxation to the closest hierarchically structured signal class that still allows for an efficient projection. The analogy to the relation of bi-sparsity and hierarchical sparsity is imminent.

Consequently, the SDT algorithm is a natural candidate to efficiently tackle the blind tomography problem. 
Fig.~\ref{WEfig:numericsrandompauli} shows numerical simulations of the performance of the SDT algorithms in the blind quantum tomography task for a random calibration model motivated by quantum technologies in comparison to a standard low-rank tomography algorithm. 
The relaxation to the hierarchical structured problem, however, comes at the cost of a sub-optimal scaling in complexity theory. 
While a parameter counting of the original blind tomography problem hints at an optimal scaling of $O(\max\{s \log N, n r\})$, the sparse demixing problem introduces already in parameter count an additional factor of $s$ to the second term $O(\max\{s \log N, s n r\})$. 
Due to the sparsity assumption on the calibration parameters, the total number of calibration parameter $N$ still only enters logarithmically.  
For this reason, the scheme remains highly scalable in practically relevant parameter regimes despite the relaxiation. 
At the same time, using the framework of hierarchical compressed sensing outlined above provides a rich toolkit to equip the SDT with flexible guarantees for many ensembles of measurement and calibration models. 
Another algorithmic approach to bi-linear structured problems such as the blind tomography problem is constraint alternating minimization. 
We refer to Ref.~\citeWE{RothEtAl:2020:Semidevicedependent} for further details. 

\section{Conclusion and outlook} \label{sec:outlook} 

In this chapter, we have introduced a framework for hierarchically compressed sensing with a focus mostly on the reconstruction of hierarchically sparse signals. In its core, standard approaches of compressed sensing naturally 
generalize to hierarchically structured signals, giving rise to recovery algorithms equipped with theoretical guarantees. 
Thereby, the successful recovery of hierarchically sparse signals via hard-thresholding algorithms can be established under a custom-tailored restricted isometry assumption. 
There are, however, 
a number of specific features that separates the hierarchical framework from its more generic counterparts.

At the heart of the approach is the fact that the projection operator onto the set of hierarchically structured signals is efficiently calculable via hierarchical hard-thresholding. 
Unlike for, e.g., the bisparse structure, it can be computed in linear time, and is highly amenable to parallelization. 
This in turn renders the simple recovery algorithm interesting in realistic parameter regimes and under practical demands.

Furthermore, within the hierarchical framework, there is a large family of operators that obey the hierarchical, but not the standard restricted isometry property. This makes the framework potentially applicable in settings where standard compressed sensing is infeasible.

On a more theoretical level, the hierarchically sparse structure can be used as a relaxation of the complicated bi-sparse structure. In particular, we have presented numerical evidence that instances of the sparse blind deconvolution problem can be solved using HiHTP. 
And we have invoked the same strategy for the quantum tomography problem and other related questions. 
While in this context theoretical guarantees are expected to be sub-optimal, the simplicity and flexibility of the hierarchical framework might still be of merit in order to analyze complicated measurement settings.  
We leave further exploring these matters to future research. 
A particularly interesting question is to analyze the HiRIP properties of the blind deconvolution operator. 

Indeed, we have at the end of this chapter seen several exemplary applications where the hierarchical approach facilitates recovery. This brings us to the arguably most important feature of the framework: Hierarchically structured signals naturally emerge in many applications. 
From our own background and past research, we can conclude this with some confidence. 
But of course, we very much suspect that there are many applications we are unaware of where the hierarchical framework is readily applicable. 
For the sake of clarity, we have mainly focused our exposition on the set of two-level hierarchically sparse vectors and merely hinted at the generalizations towards multiple levels potentially mixing low-rankness, sparsity and potentially even further structures that for themselves come with an efficient projection. 
We hope that we have conveyed that the approach, and even most of the results we presented, can be rather straight-forwardly generalized to this rich family of hierarchical signal structures, leaving the playing field wide open.

 \subsection*{Acknowledgements}
 This work is a report of some of the  findings of the DFG-funded project
 EI 519/9-1 within  the priority 
 program `Compressed Sensing in Information Processing' (CoSIP), 
jointly held by J.\ Eisert and G.\ Wunder. We specifically thank our coauthors, in particular
M. Barzegar,
G. Caire,
R. Fritscheck,
S. Haghighatshoar,
D. Hangleiter,
M. Kliesch, 
S. Stefanatos,
R. Kueng, and
J. Wilkens,
with which we have explored this research theme over the years.

\bibliographystyleWE{spmpsci}                 

\bibliographyWE{bibfileWE}                %


\begin{thebibliography}{10}
\providecommand{\url}[1]{{#1}}
\providecommand{\urlprefix}{URL }
\expandafter\ifx\csname urlstyle\endcsname\relax
  \providecommand{\doi}[1]{DOI~\discretionary{}{}{}#1}\else
  \providecommand{\doi}{DOI~\discretionary{}{}{}\begingroup
  \urlstyle{rm}\Url}\fi

\bibitem{AbsilSepulchre:2009}
Absil, P.A., Mahony, R., Sepulchre, R.: Optimization algorithms on matrix
  manifolds.
\newblock Princeton University Press (2009)

\bibitem{Roadmap}
Acin, A., Bloch, I., Buhrman, H., Calarco, T., Eichler, C., Eisert, J., Esteve,
  D., Gisin, N., Glaser, S.J., Jelezko, F., Kuhr, S., Lewenstein, M., Riedel,
  M.F., Schmidt, P.O., Thew, R., Wallraff, A., Walmsley, I., Wilhelm, F.K.: The
  european quantum technologies roadmap.
\newblock New J. Phys. \textbf{20}, 080201 (2018).
\newblock \doi{10.1088/1367-2630/aad1ea}

\bibitem{AdcockEtal:2013}
Adcock, B., Hansen, A.C., Poon, C., Roman, B.: Breaking the coherence barrier:
  A new theory for compressed sensing.
\newblock Forum of Mathematics, Sigma \textbf{5} (2017).
\newblock \doi{10.1017/fms.2016.32}

\bibitem{BarCevDua10}
Baraniuk, R.G., Cevher, V., Duarte, M.F., Hegde, C.: Model-based compressive
  sensing.
\newblock IEEE Trans. Inf. Theory \textbf{56}, 1982--2001 (2010).
\newblock \doi{10.1109/TIT.2010.2040894}

\bibitem{BlumensathDavies:2008}
Blumensath, T., Davies, M.E.: Iterative thresholding for sparse approximations.
\newblock J. Four. An. App. \textbf{14}, 629--654 (2008).
\newblock \doi{10.1007/s00041-008-9035-z}

\bibitem{BlumensathDavies:2009}
{Blumensath}, T., {Davies}, M.E.: Normalized iterative hard thresholding:
  Guaranteed stability and performance.
\newblock IEEE J. Sel. Top. Sig. Proc. \textbf{4}, 298--309 (2010).
\newblock \doi{10.1109/JSTSP.2010.2042411}

\bibitem{BouchotEtAl:2016}
Bouchot, J.L., Foucart, S., Hitczenko, P.: Hard thresholding pursuit
  algorithms: Number of iterations.
\newblock App. Comp. Harm. An. \textbf{41}, 412--435 (2016).
\newblock \doi{http://dx.doi.org/10.1016/j.acha.2016.03.002}

\bibitem{candes2005decoding}
Candes, E.J., Tao, T.: Decoding by linear programming.
\newblock IEEE Trans. on Inf. Th. \textbf{51}, 4203--4215 (2005).
\newblock \doi{10.1109/TIT.2005.858979}

\bibitem{chen:2016pilot}
Chen, Z., Yang, C.: Pilot decontamination in wideband massive mimo systems by
  exploiting channel sparsity.
\newblock IEEE Trans. Wirel. Comm. \textbf{15}, 5087--5100 (2016).
\newblock \doi{10.1109/TWC.2016.2553021}

\bibitem{DaiMilenkovic:2009}
Dai, W., Milenkovic, O.: Subspace pursuit for compressive sensing signal
  reconstruction.
\newblock IEEE Trans. Inf. Th. \textbf{55}, 2230--2249 (2009).
\newblock \doi{10.1109/TIT.2009.2016006}

\bibitem{BenchmarkingReview}
Eisert, J., Hangleiter, D., Walk, N., Roth, I., Markham, D., Parekh, R.,
  Chabaud, U., Kashefi, E.: Quantum certification and benchmarking.
\newblock Nature Rev. Phys. \textbf{2}, 382--390 (2020).
\newblock \doi{10.1038/s42254-020-0186-4}

\bibitem{CompressedSensingGitta}
Eldar, Y.C., Kutyniok, G.: Compressed sensing: Theory and applications.
\newblock Cambridge University Press (2012)

\bibitem{EldarMishali:2009b}
Eldar, Y.C., Mishali, M.: Block sparsity and sampling over a union of
  subspaces.
\newblock In: Digital Signal Processing, 2009 16th International Conference on,
  pp. 1--8 (2009).
\newblock \doi{10.1109/ICDSP.2009.5201211}

\bibitem{EldarMishali:2009a}
Eldar, Y.C., Mishali, M.: Robust recovery of signals from a structured union of
  subspaces.
\newblock IEEE Trans. Inf. Th. \textbf{55}, 5302--5316 (2009).
\newblock \doi{10.1109/TIT.2009.2030471}

\bibitem{FlammiaGrossLiu:2012}
Flammia, S.T., Gross, D., Liu, Y.K., Eisert, J.: Quantum tomography via
  compressed sensing: error bounds, sample complexity and efficient estimators.
\newblock New J. Phys. \textbf{14}, 095022 (2012).
\newblock \doi{10.1088/1367-2630/14/9/095022}

\bibitem{gross2021hierarchical}
Flinth, A., Gro{\ss}, B., Roth, I., Eisert, J., Wunder, G.: Hierarchical
  isometry properties of hierarchical measurements (2021).
\newblock In preparation.

\bibitem{Foucart:2011}
Foucart, S.: {Hard thresholding pursuit: An algorithm for compressive sensing}.
\newblock SIAM J. Num. An. \textbf{49}, 2543--2563 (2011).
\newblock \doi{10.1137/100806278}

\bibitem{foucart2019jointly}
Foucart, S., Gribonval, R., Jacques, L., Rauhut, H.: Jointly low-rank and
  bisparse recovery: Questions and partial answers.
\newblock Preprint  (2019).
\newblock ArXiv:1902.04731

\bibitem{FouRau13}
Foucart, S., Rauhut, H.: A mathematical introduction to compressive sensing.
\newblock Springer (2013)

\bibitem{FriedmanEtAl:2010}
Friedman, J., Hastie, T., Tibshirani, R.: A note on the group lasso and a
  sparse group lasso.
\newblock Preprint  (2010).
\newblock ArXiv: 1001.0736

\bibitem{RevModPhys.74.145}
Gisin, N., Ribordy, G., Tittel, W., Zbinden, H.: Quantum cryptography.
\newblock Rev. Mod. Phys. \textbf{74}, 145--195 (2002).
\newblock \doi{10.1103/RevModPhys.74.145}

\bibitem{QuantumReadout}
Gluza, M., Schweigler, T., Rauer, B., Krumnow, C., Schmiedmayer, J., Eisert,
  J.: Quantum read-out for cold atomic quantum simulators.
\newblock Phys. Comm. \textbf{20}, 12 (2020).
\newblock \doi{10.1038/s42005-019-0273-y}

\bibitem{Compressed}
Gross, D., Liu, Y.K., Flammia, S.T., Becker, S., Eisert, J.: Quantum state
  tomography via compressed sensing.
\newblock Phys. Rev. Lett. \textbf{105}, 150401 (2010).
\newblock \doi{10.1103/PhysRevLett.105.150401}

\bibitem{Hoare:1961}
Hoare, C.A.R.: Algorithm 65: {Find}.
\newblock Commun. ACM \textbf{4}, 321--322 (1961).
\newblock \doi{10.1145/366622.366647}

\bibitem{jokar:2009sparse}
Jokar, S., Mehrmann, V.: Sparse solutions to underdetermined {K}ronecker
  product systems.
\newblock Lin. Alg. App. \textbf{431}, 2437--2447 (2009).
\newblock \doi{10.1016/J.LAA.2009.08.005}

\bibitem{KalevKosutDeutsch:2015}
Kalev, A., Kosut, R.L., Deutsch, I.H.: Quantum tomography protocols with
  positivity are compressed sensing protocols.
\newblock njp Quant. Inf. \textbf{1}, 15018 (2015).
\newblock \doi{10.1038/npjqi.2015.18}

\bibitem{KlieschRoth:2020:Theory}
Kliesch, M., Roth, I.: Theory of quantum system certification.
\newblock PRX Quantum \textbf{2}, 010201 (2021).
\newblock \doi{10.1103/PRXQuantum.2.010201}

\bibitem{LiAdcock:2016}
Li, C., Adcock, B.: Compressed sensing with local structure: uniform recovery
  guarantees for the sparsity in levels class.
\newblock Appl. Comp. Harm. Anal. \textbf{46}, 453--477 (2019).
\newblock \doi{10.1016/j.acha.2017.05.006}

\bibitem{ling2017blind}
Ling, S., Strohmer, T.: Blind deconvolution meets blind demixing: Algorithms
  and performance bounds.
\newblock IEEE Trans. Inf. Th. \textbf{63}, 4497--4520 (2017)

\bibitem{LiuSun:2011}
Liu, H., Sun, F.: Hierarchical orthogonal matching pursuit for face
  recognition.
\newblock In: The {First} {Asian} {Conference} on {Pattern} {Recognition}, pp.
  278--282 (2011).
\newblock \doi{10.1109/ACPR.2011.6166530}

\bibitem{magdon-ismail_np-hardness_2017}
Magdon-Ismail, M.: {NP}-hardness and inapproximability of sparse {PCA}.
\newblock Inf. Proc. Lett. \textbf{126}, 35--38 (2017).
\newblock \doi{10.1016/j.ipl.2017.05.008}

\bibitem{Needell08}
Needell, D., Tropp, J.A.: {CoSaMP: Iterative signal recovery from incomplete
  and inaccurate samples}.
\newblock Appl. Comp. Harm. Anal.  (2008).
\newblock \doi{10.1016/j.acha.2008.07.002}

\bibitem{RiofrioEtAl:2017}
Riofrio, C.A., Gross, D., Flammia, S.T., Monz, T., Nigg, D., Blatt, R., Eisert,
  J.: Experimental quantum compressed sensing for a seven-qubit system.
\newblock Nature Comm. \textbf{8}, 15305 (2017).
\newblock \doi{10.1038/ncomms15305}

\bibitem{roth2018hierarchical}
Roth, I., Flinth, A., Kueng, R., Eisert, J., Wunder, G.: Hierarchical
  restricted isometry property for {K}ronecker product measurements.
\newblock In: 2018 56th Annual Allerton Conference on Communication, Control,
  and Computing (Allerton), pp. 632--638. IEEE (2018)

\bibitem{RothEtAl:2020:HiHTP}
Roth, I., Kliesch, M., Flinth, A., Wunder, G., Eisert, J.: {Reliable recovery
  of hierarchically sparse signals for Gaussian and Kronecker product
  measurements}.
\newblock IEEE Trans. Signal Process. \textbf{68}, 4002–4016 (2020).
\newblock \doi{10.1109/tsp.2020.3003453}

\bibitem{RothEtAl:2016:Proceedings}
{Roth}, I., {Kliesch}, M., {Wunder}, G., {Eisert}, J.: Reliable recovery of
  hierarchically sparse signals.
\newblock In: Proceedings of the third "International Traveling Workshop on
  Interactions between Sparse models and Technology" (iTWIST'16), pp. 36--38
  (2016)

\bibitem{RothEtAl:2018:Recovering}
Roth, I., Kueng, R., Kimmel, S., Liu, Y.K., Gross, D., Eisert, J., Kliesch, M.:
  Recovering quantum gates from few average gate fidelities.
\newblock Phys. Rev. Lett. \textbf{121} (2018).
\newblock \doi{10.1103/physrevlett.121.170502}

\bibitem{RothEtAl:2020:Semidevicedependent}
Roth, I., Wilkens, J., Hangleiter, D., Eisert, J.: Semi-device-dependent blind
  quantum tomography.
\newblock Preprint  (2020).
\newblock ArXiv:2006.03069

\bibitem{ShabaniEtAl:2011}
Shabani, A., Kosut, R.L., Mohseni, M., Rabitz, H., Broome, M.A., Almeida, M.P.,
  Fedrizzi, A., White, A.G.: Efficient measurement of quantum dynamics via
  compressive sensing.
\newblock Phys. Rev. Lett. \textbf{106}, 100401 (2011).
\newblock \doi{10.1103/PhysRevLett.106.100401}

\bibitem{ShabaraKoksalEkici:2021}
Shabara, Y., Koksal, C.E., Ekici, E.: How long to estimate sparse mimo channels
  (2021)

\bibitem{SimonEtAl:2013}
Simon, N., Friedman, J., Hastie, T., Tibshirani, R.: A {sparse}-{group}
  {Lasso}.
\newblock J. Comp. Graph. Stat. \textbf{22}, 231--245 (2013).
\newblock \doi{10.1080/10618600.2012.681250}

\bibitem{SprechmannEtAl:2010}
Sprechmann, P., Ramirez, I., Sapiro, G., Eldar, Y.: Collaborative hierarchical
  sparse modeling.
\newblock In: 2010 44th {Annual} {Conference} on {Information} {Sciences} and
  {Systems} ({CISS}), pp. 1--6 (2010).
\newblock \doi{10.1109/CISS.2010.5464845}

\bibitem{SprechmannEtAl:2011}
Sprechmann, P., Ramirez, I., Sapiro, G., Eldar, Y.C.: C-{HiLasso}: {A}
  {collaborative} {hierarchical} {sparse} {modeling} {framework}.
\newblock IEEE Trans. Sig. Proc. \textbf{59}, 4183--4198 (2011).
\newblock \doi{10.1109/TSP.2011.2157912}

\bibitem{SteffensEtAl:2017}
Steffens, A., Riofrío, C.A., McCutcheon, W., Roth, I., Bell, B.A., McMillan,
  A., Tame, M.S., Rarity, J.G., Eisert, J.: Experimentally exploring compressed
  sensing quantum tomography.
\newblock Quantum Sci. and Technol. \textbf{2}, 025005 (2017).
\newblock \doi{10.1088/2058-9565/aa6ae2}

\bibitem{StrohmerWei:2017}
Strohmer, T., Wei, K.: Painless breakups-efficient demixing of low rank
  matrices.
\newblock J. Four. Ana. App. \textbf{25}, 1--31 (2019).
\newblock \doi{10.1007/s00041-017-9564-4}

\bibitem{Tropp:2004}
Tropp, J.A.: Greed is good: algorithmic results for sparse approximation.
\newblock IEEE Trans. Inf. Th. \textbf{50}, 2231--2242 (2004).
\newblock \doi{10.1109/TIT.2004.834793}

\bibitem{Vandereycken:2013}
Vandereycken, B.: Low-rank matrix completion by {R}iemannian optimization.
\newblock SIAM J. Opt. \textbf{23}, 1214--1236 (2013).
\newblock \doi{10.1137/110845768}

\bibitem{reciprocity}
Vasisht D.~Kumar, S., Katabi, D.: Decimeter-level localization with a single
  {WiFi} access point.
\newblock In: NSDI, p. 165–178 (2016)

\bibitem{WeiEtAl:2016}
Wei, K., Cai, J.F., Chan, T.F., Leung, S.: Guarantees of {R}iemannian
  optimization for low rank matrix recovery.
\newblock SIAM J. Mat. An. App. \textbf{37}, 1198--1222 (2016).
\newblock \doi{10.1137/15M1050525}

\bibitem{BSTrepo}
Wilkens, J., Hangleiter, D., Roth, I.:  (2020)

\bibitem{wunder2021measure}
Wunder, G., Flinth, A., Gro\ss, B.: Measure concentration on the {OFDM}-based
  massive random access channel (2021).
\newblock In preparation.

\bibitem{wunder2018secure}
Wunder, G., Roth, I., Fritschek, R., Gro{\ss}, B., Eisert, J.: Secure massive
  {IoT} using hierarchical fast blind deconvolution.
\newblock In: 2018 IEEE Wireless Communications and Networking Conference
  Workshops (WCNCW), pp. 119--124. IEEE (2018)

\bibitem{wunder2019low}
Wunder, G., Stefanatos, S., Flinth, A., Roth, I., Caire, G.: Low-overhead
  hierarchically-sparse channel estimation for multiuser wideband massive
  {MIMO}.
\newblock IEEE Trans. Wireless Comm. \textbf{18}, 2186--2199 (2019)

\end{thebibliography}
\end{document}